\numberwithin{equation}{section}
\newcommand{\vx}{\mathbf{x}}
\newcommand{\vbeta}{\boldsymbol{\beta}}
\newcommand{\vgamma}{\boldsymbol{\gamma}}
\newcommand{\vmu}{\boldsymbol{\mu}}
\newcommand{\vepsilon}{\boldsymbol{\varepsilon}}
\newcommand{\vzero}{\mathbf{0}}
\newcommand{\mH}{\mathbf{H}}
\newcommand{\mI}{\mathbf{I}}
\newcommand{\mV}{\mathbf{V}}
\newcommand{\mX}{\mathbf{X}}
\newcommand{\mY}{\mathbf{Y}}
\newcommand{\mSigma}{\boldsymbol{\Sigma}}
\DeclareMathOperator*{\argmin}{arg\,min}
\DeclareMathOperator*{\argmax}{arg\,max}
\newcommand{\E}{\mathbb{E}}
\newcommand{\R}{\mathbb{R}}
\theoremstyle{plain}
\newtheorem{theorem}{Theorem}
\newtheorem{proposition}[theorem]{Proposition}
\begin{document}

\title{Variance prior forms for high-dimensional Bayesian variable selection}

\author{Gemma E. Moran \thanks{Department of Statistics, The Wharton School, University of Pennsylvania, Philadelphia, PA 19104. Email: \texttt{gmoran@wharton.upenn.edu}}, Veronika Ro\v{c}kov\'{a} \thanks{Booth School of Business, University of Chicago, Chicago, IL 60637. }, and Edward I. George \thanks{Department of Statistics, The Wharton School, University of Pennsylvania, Philadelphia, PA 19104.  
}}

\maketitle

\begin{abstract}
Consider the problem of high dimensional variable selection for the Gaussian linear model when the unknown error variance is also of interest. In this paper, we show that the use of conjugate shrinkage priors for Bayesian variable selection can have detrimental consequences for such variance estimation.  Such priors are often motivated by the invariance argument of \citet{J61}.  Revisiting this work, however, we highlight a caveat that Jeffreys himself noticed; namely that biased estimators can result from inducing dependence between parameters \emph{a priori}. In a similar way, we show that conjugate priors for linear regression, which induce prior dependence, can lead to such underestimation in the Bayesian high-dimensional regression setting.  Following Jeffreys, we recommend as a remedy to treat regression coefficients and the error variance as independent \emph{a priori}. Using such an independence prior framework, we extend the Spike-and-Slab Lasso of \citet{RG18} to the unknown variance case. This extended procedure outperforms both the fixed variance approach and alternative penalized likelihood methods on simulated data.  On the protein activity dataset of \citet{CP98}, the Spike-and-Slab Lasso with unknown variance achieves lower cross-validation error than alternative penalized likelihood methods, demonstrating the gains in predictive accuracy afforded by simultaneous error variance estimation. 
\end{abstract}

\section{Introduction}

Consider the classical linear regression model
\begin{equation}
\mY= \mX\vbeta + \vepsilon, \quad \vepsilon \sim N_n(0, \sigma^2\mI_n) \label{regression}
\end{equation}
where $\mY\in\R^n$ is a vector of responses, $\mX = [\mX_1,\dots, \mX_p] \in \R^{n\times p}$ is a fixed regression matrix of $p$ potential predictors, $\vbeta = (\beta_1,\dots, \beta_p)^T \in \R^p$ is a vector of unknown regression coefficients and $\vepsilon \in \R^n$ is the noise vector of independent normal random variables with $\sigma^2$ as their unknown common variance. 

When $\vbeta$ is sparse so that most of its elements are zero or negligible, finding the non-negligible elements of $\vbeta$, the so-called variable selection problem, is of particular importance.   Whilst this problem has been studied extensively from both frequentist and Bayesian perspectives, much less attention has been given to {the simultaneous estimation of the error variance $\sigma^2$. Accurate estimates of $\sigma^2$} are important to discourage fitting the noise {beyond the the signal, thereby helping to} mitigate overfitting of the data. Variance estimation is also essential in uncertainty quantification for inference and prediction.

In the frequentist literature, the question of estimating the error variance in our setting has begun to be addressed with papers including the scaled Lasso \citep{SZ12} and the square-root Lasso \citep{B14}.  Contrastingly, in the Bayesian literature, the error variance has been fairly straightforwardly estimated by including $\sigma^2$ in  prior specifications. Despite this conceptual simplicity, the majority of theoretical guarantees for Bayesian procedures restrict attention to the case of known $\sigma^2$, as there is not a generally agreed upon prior specification when $\sigma^2$ is unknown. More specifically, priors on $\vbeta$ and $\sigma^2$ are typically introduced in one of two ways: either via a conjugate prior framework or via an independence prior framework.

Conjugate priors have played a major role in regression analyses. The conjugate prior framework for \eqref{regression} begins with specifying a prior on $\vbeta$ that depends on $\sigma^2$ as follows:
\begin{align}
\vbeta |\sigma^2 \sim N(0, \sigma^2\mV), \label{conj}
\end{align}
where $\mV$ may be fixed or random. This prior \eqref{conj} results in a Gaussian posterior for $\vbeta$ and as such is conjugate. To complete the framework, $\sigma^2$ is assigned an inverse-gamma (or equivalently scaled-inverse-$\chi^2$) prior. A common choice in this regard is the right-Haar prior for the location-scale group \citep{B98}:
\begin{align}
\pi(\sigma) \propto 1/\sigma. \label{right_haar}
\end{align}
Whilst the right-Haar prior is improper, it can be viewed as the limit of an inverse-gamma density. When combined with \eqref{conj}, the prior \eqref{right_haar} results in an inverse-gamma posterior for $\sigma^2$ and as such it behaves as a conjugate prior. 

Prominent examples that utilize the above conjugate prior framework include:
\begin{itemize}
\item Bayesian ridge regression priors, with $\mV = \tau^2\mI$;
\item Zellner's $g$-prior, with $\mV = g(\mX^T\mX)^{-1}$; and
\item Gaussian global-local shrinkage priors, with $\mV = \tau^2\Lambda,$ for $\Lambda = \text{diag}\{\lambda_j\}_{j=1}^p.$
\end{itemize}

We note that the conjugate prior framework refers only to the prior characterization of $\vbeta$ and $\sigma^2$, and allows for any prior specification on subsequent hyper-parameters such as $g$ and $\tau^2$  which do not appear in the likelihood. 

A main reason for the popularity of the conjugate prior framework is that it often allows for marginalization over $\vbeta$ and $\sigma^2$, resulting in closed form expressions for Bayes factors and updates of posterior model probabilities. This allowed for analyses of the model selection consistency \citep{BB12} as well as more computationally efficient MCMC algorithms \citep{GM97}.  Despite these advantages, however, the conjugate prior framework is not innocuous for variance estimation, as we will show in this work. 

Alternatively to the conjugate prior framework, one might treat $\vbeta$ and $\sigma^2$ as independent \emph{a priori}. The formulation corresponding to \eqref{conj} for this independence prior framework is:
\begin{align}
&\vbeta \sim N(0, \mV), \label{ind}\\
&\pi(\sigma) \propto 1/\sigma.\notag
\end{align} 
Note that the prior characterization \eqref{ind} does not yield a normal inverse-gamma posterior distribution on $(\vbeta, \sigma^2)$ and as such is not conjugate.

In addition to the above prior frameworks, Bayesian methods for variable selection can be further categorized by the way they treat negligible predictors. Discrete component Bayesian methods for variable selection exclude negligible predictors from consideration, adaptively reducing the dimension of $\vbeta$. Examples of such discrete component methods include spike-and-slab priors where the ``spike'' distribution is a point-mass at zero \citep{MB88}. In contrast, continuous Bayesian methods for variable selection shrink, rather than exclude, negligible predictors and as such $\vbeta$ remains $p$-dimensional \citep{GM93, PS10, RG14}.

In this paper, we show that for continuous Bayesian variable selection methods, the conjugate prior framework can result in underestimation of the error variance when: (i) the regression coefficients $\vbeta$ are sparse; and (ii) $p$ is of the same order as, or larger than $n$.
Intuitively, conjugate priors implicitly add $p$ ``pseudo-observations'' to the posterior which can distort inference for the error variance when the true number of non-zero $\vbeta$ is much smaller than $p$.  This is not the case for discrete component methods which adaptively reduce the size of $\vbeta$. To avoid the underestimation problem in the continuous case, we recommend the use of independent priors {on $\vbeta$ and $\sigma^2$}.  Further, we extend the Spike-and-Slab Lasso of \citet{RG18} to the unknown variance case with {an independent prior formulation,} and highlight the performance gains over the known variance case via a simulation study. On the protein activity dataset of \citet{CP98}, we demonstrate the benefit of simultaneous variance estimation for both variable selection and prediction. The implementation of the Spike-and-Slab Lasso is publicly available in the R package \texttt{SSLASSO} \citep{SSLpackage}.

It is important to note the difference in the scope of this work with previous work on variance priors, including \citet{G04, BB12, L08}. Here, we are focused on the estimation of the error variance, $\sigma^2$. In contrast, the aforementioned works are concerned with the choice of priors for hyper-parameters which do not appear in the likelihood, i.e. the $g$ in the $g$-prior, and $\tau^2$ and $\lambda_j^2$ for global-local priors.  We recognize the importance of the choice of these priors for Bayesian variable selection; however, the focus of this paper is the prior choice for the error variance in conjunction with variable selection. 

We also note that our discussion considers only Gaussian related prior forms for the regression coefficients.  Despite this seemingly limited scope, we note that the majority of priors used in Bayesian variable selection can be cast as a scale-mixture of Gaussians \citep{PS10}, and that popular frequentist procedures such as the Lasso and variants thereof also fall under this framework.

The paper is structured as follows.  In Section 2, we discuss invariance arguments for conjugate priors and draw connections with Jeffreys priors. We then highlight situations where we ought to depart from Jeffreys priors; namely, in multivariate situations. In Section 3, we take Bayesian ridge regression as an example to highlight why conjugate priors can be a poor choice. In Section 4, we draw connections between Bayesian regression and concurrent developments with variance estimation in the penalized likelihood literature. In Section 5, we examine the mechanisms of the Gaussian global-local shrinkage framework and illustrate why they can be incompatible with the conjugate prior structure. In Section 6, we consider the Spike-and-Slab Lasso of \citet{RG18} and highlight how the conjugate prior yields poor estimates of the error variance. We then extend the procedure to include {the unknown variance case} using an independent prior structure and demonstrate via simulation studies how this leads to performance gains over not only the known variance case, but a variety of other variable selection procedures. In Section 7, we apply the Spike-and-Slab Lasso with unknown variance to the protein activity dataset of \citet{CP98}, highlighting the improved predictive performance afforded by simultaneous variance estimation. {We conclude with a discussion in Section 8}.

\section{Invariance Criteria}

A common argument used in favor of the conjugate prior for Bayesian linear regression is
that it is invariant to scale transformations of the response (Bayarri et al.,
2012). That is, the regression coefficients depend \emph{a priori} on $\sigma^2$ in a ``scale-free way'' through
\begin{align}
\pi(\vbeta|\sigma^2) = \frac{1}{\sigma^p} h(\vbeta/\sigma),
\end{align}
for some proper density function $h(x)$.  This means that the units of measurement used for the response do not affect the resultant estimates; for example, if $\mY$ is scaled by a factor of $c$, one would expect that the estimates for the regression coefficients, $\vbeta$, and error variance, $\sigma^2$, should also be scaled by $c$.

A more general principle of invariance was proposed by \citet{J61} in his seminal work, {\em The Theory of Probability}, a reference which is also sometimes given for the conjugate prior. In this section,  we examine the original invariance argument of \citet{J61} and highlight a caveat with this principle that the author himself noted; namely that it should be avoided in multivariate situations.  We then draw connections between this suboptimal multivariate behavior and the conjugate prior framework, ultimately arguing similarly to Jeffreys that we should treat the mean and variance parameters as independently \emph{a priori}.

\subsection{Jeffreys Priors}
For a parameter $\alpha$, the {Jeffreys} prior is
\begin{equation}
\pi(\alpha) \propto |I(\alpha)|^{1/2},
\end{equation}
where $I(\alpha)$ is the Fisher information matrix.  The main motivation given by \citet{J61} for these priors was that they are invariant for all nonsingular transformations of the parameters. This property appeals to intuition regarding objectivity; ideally, the prior information we decide to include should not {depend} upon the choice of the parameterization, which itself is arbitrary. 

Despite this intuitively appealing property, the following problem with this principle was spotted in the original work of \citet{J61} and later re-emphasized by \citet{R09} in their revisit of the work. Consider the model
\begin{align*}
Y_i \sim N(\mu, \sigma^2), \quad  i = 1,\dots, n.
\end{align*}
If we treat the parameters $\mu$ and $\sigma$ independently, the Jeffreys priors are  $\pi(\mu) \propto 1$ and $\pi(\sigma) \propto 1/\sigma$.  However, if the parameters are considered jointly, the Jeffreys prior is $\pi(\mu, \sigma) \propto 1/\sigma^2$. This discrepancy is exaggerated when we include more parameters. {In effect}, by considering the parameters jointly as opposed to independently, we are implicitly including additional ``pseudo-observations'' of $\sigma^2$ and consequently distorting our estimates of the error variance.  

This ``pseudo-observation'' interpretation can be seen explicitly in the conjugate form of the Jeffreys prior for a Gaussian likelihood. For example: suppose now we have an $n$-dimensional mean denoted by $\vmu = (\mu_1, \dots, \mu_n)$. That is,
\begin{align*}
Y_i \sim N(\mu_i, \sigma^2), \quad   i = 1,\dots, n.
\end{align*}
The joint Jeffreys prior $\pi(\vmu, \sigma) \propto 1/\sigma^{n + 1}$ is an improper inverse-gamma prior with shape parameter, $n/2$, and scale parameter zero.  As the prior is conjugate, the posterior distribution for the variance is also inverse-gamma:
\begin{equation}
\pi(\sigma^2|\mY, \vmu) \sim IG\left(\frac{n}{2} + \frac{n}{2}, \ 0 + \frac{\sum_{i=1}^n (Y_i - \mu_i)^2}{2}\right) \label{pseudo_observations}
\end{equation}
where the first term of both the shape and scale parameters in \eqref{pseudo_observations} are the prior hyperparameters. Thus, the dependent Jeffreys prior can be thought of as encoding knowledge of $\sigma^2$ from a previous experiment where there were $n$ observations which yielded a sample variance of zero.  This results in the prior concentrating around zero for large $n$ and will severely distort posterior estimates of $\sigma^2$.  As we shall see later, this \emph{dependent Jeffreys prior} for the parameters is in some cases akin to the conjugate prior framework in \eqref{conj}.

This prior dependence between the parameters is explicitly repudiated by \citet{J61} who states (with notation changed to match ours): ``in the usual situation in an estimation problem, $\mu$ and $\sigma^2$ are each capable of any value over a considerable range, and neither gives any appreciable information about the other. We should then take: $\pi(\mu, \sigma) = \pi(\mu)\pi(\sigma).$''  That is, Jeffreys' remedy is to treat the parameters independently \emph{a priori}, a recommendation which we also adopt.  In addition, Jeffreys points out that the key problem with the joint Jeffreys prior is that it does not have the same reduction of degrees of freedom required by the introduction of additional nuisance parameters. We shall examine this phenomenon in more detail in Section 3 where we will discuss the consequences of using dependent Jefferys priors and other conjugate formulations in Bayesian linear regression.

We note a possible exception to this independence argument which is found later in {\em The Theory of Probability} where Jeffreys argues that for simple normal testing, the prior on $\mu$ under the alternative hypothesis should depend on $\sigma^2$. However, it is important to note that this recommendation is for the situation where $\mu$ is one-dimensional and so the underestimation problem observed in \eqref{pseudo_observations} is not a problem.  Given Jeffreys' earlier concerns regarding multivariate situations, it is unlikely he intended this dependence to generalize for higher dimensional $\mu$.

\section{Bayesian Regression}

\subsection{Prior considerations}

Consider again the classical linear regression model in \eqref{regression}. For a non-informative prior, it is common to use $\pi(\vbeta, \sigma^2) \propto 1/\sigma^2$ \citep[see, for example,][]{G14}. Similarly to our earlier discussion, this prior choice corresponds to multiplying the independent, Jeffreys priors for $\vbeta$ and $\sigma$. In contrast, the joint Jeffreys prior would be $\pi(\vbeta, \sigma^2) \propto 1/\sigma^{p+2}$. Let us now examine the estimates resulting from the former, independent Jeffreys prior.  In this case, we have the following marginal posterior mean estimate for the error variance:
\begin{equation}
\E[\sigma^2|\mY] = \frac{\lVert \mY - \mX\widehat{\vbeta}\rVert^2}{n -p - 2}\label{ind_jeffreys_mean}
\end{equation}
where $\widehat{\vbeta} = (\mX^T\mX)^{-1}\mX^T\mY$ is the usual least squares estimator. We observe that the degrees of freedom adjustment, $n-p$, naturally appears in the denominator. This does not occur for the joint Jeffreys prior where the marginal posterior mean is given by:
\begin{equation}
\E[\sigma^2|\mY] =\frac{\lVert \mY - \mX\widehat{\vbeta}\rVert^2}{n-2}. \label{joint_jeffreys_mean}
\end{equation}
For large $p$, {this} estimator with the joint Jeffreys prior will severely underestimate the error variance.  Avoiding this, it is commonly accepted that the independent Jeffreys prior $\pi(\vbeta, \sigma^2) \propto 1/\sigma^2$ should be the default non-informative prior in this setting. 

There is no such clarity, however, in the use of conjugate priors for Bayesian linear regression. To add to this discourse, we show that these conjugate priors can suffer the same problem as the dependent Jeffreys priors and recommend, similarly to Jeffreys, that independent priors should be used instead.  We make this point with the following example. A common conjugate prior choice for Bayesian linear regression is 
\begin{equation}
\vbeta|\sigma^2, \tau^2 \sim N_p(0, \sigma^2\tau^2\mI). \label{ridge}
\end{equation}
If we consider the parameter $\tau^2$ to be fixed, this prior choice corresponds to Bayesian ridge regression. {With} an additional non-informative prior $\pi(\sigma^2)\propto 1/\sigma^2$, we then have the joint prior
\begin{equation}
\pi(\vbeta|\sigma^2)\pi(\sigma^2) = \pi(\vbeta, \sigma^2) \propto \frac{1}{\sigma^{p + 2}}\exp\left\{-\frac{1}{2\sigma^2\tau^2}\lVert \vbeta\rVert^2\right\}. \label{conj_ridge}
\end{equation}
Note again the $\sigma^{p + 2}$ in the denominator, similarly to the joint Jeffreys prior. 

It is illustrative to consider the conditional prior dependence here of $\sigma^2$ on $\vbeta$ from a ``pseudo-observation'' perspective: the implicit conditional prior on $\sigma^2$ from \eqref{conj_ridge} is given by
\begin{align}
\sigma^2 |\vbeta \sim IG\left(\frac{p }{2}, \frac{\lVert \vbeta\rVert^2}{2\tau^2}\right). \label{sigma_ig}
\end{align}
Similarly to the discussion in Section 2, this inverse-gamma prior has the following interpretation: from a previous experiment, the sample variance of $p$ observations was $\frac{1}{p}\lVert \vbeta \rVert^2/\tau^{2}$.  For regions where $\vbeta$ is sparse, this dependence leads to prior concentration of $\sigma^2$ around zero as illustrated by the following.
\begin{proposition}\label{conjugate_concentration}
Suppose $\lVert \vbeta \rVert_0 = q$ and $\max_j \beta_j^2 = K$ for some constant $K\in\R$ .  Denote the true variance as $\sigma_0^2$. Then
\begin{align}
P\left({\sigma^2}/{\sigma_0^2} \geq \varepsilon \ |\ \vbeta\right) &\leq \frac{q}{p-2}\frac{K}{\tau^2}\frac{1}{\varepsilon \sigma_0^2}.\label{markov_1}
\end{align}
 \end{proposition}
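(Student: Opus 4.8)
The plan is to apply Markov's inequality to the conditional prior distribution of $\sigma^2$ given $\vbeta$, which the text has already identified in equation \eqref{sigma_ig} as $\sigma^2 \mid \vbeta \sim IG(p/2,\ \lVert\vbeta\rVert^2/(2\tau^2))$. The only two ingredients needed are (i) a closed-form bound on the conditional prior mean $\E[\sigma^2 \mid \vbeta]$ and (ii) the elementary tail bound $P(Z \geq t) \leq \E[Z]/t$ for a nonnegative random variable $Z$.

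First I would recall that an inverse-gamma random variable with shape $a$ and scale $b$ has mean $b/(a-1)$ whenever $a > 1$. Here $a = p/2$ and $b = \lVert\vbeta\rVert^2/(2\tau^2)$, so (assuming $p > 2$, so that the mean exists) we get
\begin{equation}
\E[\sigma^2 \mid \vbeta] = \frac{\lVert\vbeta\rVert^2/(2\tau^2)}{p/2 - 1} = \frac{\lVert\vbeta\rVert^2}{(p-2)\tau^2}. \notag
\end{equation}
Next I would use the sparsity and boundedness hypotheses to control $\lVert\vbeta\rVert^2$: since exactly $q$ coordinates of $\vbeta$ are nonzero and each satisfies $\beta_j^2 \leq K$, summing gives $\lVert\vbeta\rVert^2 = \sum_{j} \beta_j^2 \leq qK$. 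Substituting yields $\E[\sigma^2 \mid \vbeta] \leq qK/((p-2)\tau^2)$.

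Finally I would apply Markov's inequality to the nonnegative quantity $\sigma^2/\sigma_0^2$, conditionally on $\vbeta$:
\begin{equation}
P\!\left(\sigma^2/\sigma_0^2 \geq \varepsilon \mid \vbeta\right) \leq \frac{\E[\sigma^2 \mid \vbeta]}{\varepsilon\,\sigma_0^2} \leq \frac{qK}{(p-2)\tau^2\,\varepsilon\,\sigma_0^2}, \notag
\end{equation}
which is exactly the claimed bound \eqref{markov_1}. There is essentially no hard step here; the one point worth stating carefully is that the inverse-gamma mean formula requires $p > 2$ (which is implicit in the statement, as the right-hand side already has $p-2$ in a denominator), and that the bound $\lVert\vbeta\rVert^2 \leq qK$ is tight only when all nonzero coefficients attain the maximum magnitude. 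Everything else is a one-line substitution.
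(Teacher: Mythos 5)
Your proposal is correct and is exactly the argument the paper intends: the paper's proof is the one-line remark ``follows from Markov's inequality and the bound $\lVert \vbeta\rVert^2 \leq qK$,'' and you have simply filled in the inverse-gamma mean computation $\E[\sigma^2\mid\vbeta] = \lVert\vbeta\rVert^2/((p-2)\tau^2)$ that makes it go through. Your added caveat that $p>2$ is needed for the mean to exist is a reasonable and harmless clarification.
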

\begin{proof}
Proposition \ref{conjugate_concentration} follows from Markov's inequality and the bound $\lVert \vbeta\rVert^2 \leq qK$.
\end{proof}

Proposition \ref{conjugate_concentration} implies that as $q/p \to 0$, we can choose $0 < \varepsilon < 1$ such that the prior places decreasing mass on values of $\sigma^2$ greater than $\varepsilon\sigma_0^2$. Thus, in regions of bounded sparse regression coefficients, the conjugate Gaussian prior can result in poor estimation of the true variance.

From a more philosophical perspective, \eqref{sigma_ig} corresponds to prior knowledge that the error variance is implicitly the sample variance of previous observations of the regression coefficients, $\vbeta$.  This is troubling given that the error variance is independent of the signal and in particular of the regression coefficients. 

In the next section, we conduct a simulation study for the simple case of Bayesian ridge regression and show empirically how this implicit prior on $\sigma^2$ can distort estimates of the error variance. 

\subsection{The failure of a conjugate prior}

As an illustrative example, we take $n = 100$ and $p = 90$ and compare the least squares estimates of $\vbeta$ and $\sigma^2$ to Bayesian ridge regression estimates with (i) the conjugate formulation with \eqref{ridge} and (ii) the independent prior formulation with
\begin{equation}
\pi(\vbeta) \sim N_p(0, \tau^2 \mI).
\end{equation}
For both Bayesian ridge regression procedures we use the non-informative error variance prior: $\pi(\sigma^2) \propto 1/\sigma^2$. The predictors $\mX_i$, $i = 1,\dots, p$ are generated as independent standard normal random variables. The true $\vbeta_0$ is set to be sparse with only six non-zero elements; the non-zero coefficients are set to $\{-2.5, -2, -1.5, 1.5, 2, 2.5\}$.   The response $\mY$ is generated according to \eqref{regression} with the true variance being $\sigma^2 = 3$.  We take $\tau = 10$ {as known} and highlight that this weakly informative choice leads to poor variance estimates in the conjugate prior framework.  Whilst an empirical or fully Bayes approach for estimating $\tau^2$ may be preferable for high-dimensional regression, it is troubling that the conjugate prior yields poor results for a simple example where $n > p$ and in which least squares and the independent prior perform well. 

The conjugate prior formulation allows for the exact expressions for the marginal posterior means of $\vbeta$ and $\sigma^2$:
\begin{align}
\E[\vbeta|\mY] &= [\mX^T\mX + \tau^{-2}\mI]^{-1}\mX^T\mY  \label{conj_means} \\
  \E[\sigma^2|\mY] &= \frac{\mY^T[\mI-\mH_{\tau}]\mY}{n - 2} \label{conj_sigma_marginal}
\end{align}
where $\mH_{\tau} = \mX [\mX^T\mX + \tau^{-2}\mI]^{-1}\mX^T$. Similarly to \eqref{joint_jeffreys_mean}, the above marginal posterior mean for $\sigma^2$ does not incorporate a degrees of freedom adjustment and so we expect this estimator to underestimate the true error variance. 

It is illuminating to observe the underestimation problem when considering the conditional posterior mean of $\sigma^2$, instead of the marginal:
\begin{equation}
\E[\sigma^2|\mY, \vbeta] = \frac{\lVert \mY - \mX\vbeta\rVert^2 + \lVert \vbeta\rVert^2/\tau^2}{n + p - 2}.  \label{conj_joint_sigma}
\end{equation}
The additional $p$ in the denominator here leads to severe underestimation of $\sigma^2$ when $\vbeta$ is sparse and bounded as in Proposition \ref{conjugate_concentration} and $p$ is of the same order as, or larger than, $n$, as discussed in the previous section. We note in passing that a value of $\tau^2$ close to $\lVert \vbeta\rVert^2/p\sigma^2$, which may be obtainable with an empirical or fully Bayes approach, would avoid this variance underestimation problem, as can be seen from \eqref{conj_joint_sigma}.

This is in contrast to the conditional posterior mean for $\sigma^2$ using the independent prior formulation \eqref{ind}, which we also consider. This estimator is given by:
\begin{equation}
\E[\sigma^2|\mY, \vbeta] = \frac{\lVert \mY-\mX\vbeta\rVert^2}{n-2}.\label{ind_joint_sigma}
\end{equation}
Here we do not observe a degrees of freedom adjustment because \eqref{ind_joint_sigma} is the \emph{conditional} posterior mean, not the marginal. Earlier in \eqref{ind_jeffreys_mean} we considered the marginal posterior mean for the independent Jeffreys' prior which led to the $n-p$ in the denominator.  For the marginal posterior means of $\vbeta$ and $\sigma^2$, the independent prior formulation does not yield closed form expressions. To assess these, we use a Gibbs sampler, the details of which may be found in the appendix.  

When $\tau^2$ is large, the estimate of $\vbeta$ for both the conjugate and independent formulations are almost exactly the least-squares estimate, 
$\widehat{\vbeta} = [\mX^T\mX]^{-1}\mX^T\mY.$  However, the estimates of the variance $\sigma^2$ differ substantially.

In Figure \ref{ridge_boxplot}, we display a boxplot of the estimates of $\sigma^2$ for (i) Least Squares, (ii) Conjugate Bayesian ridge regression, (iii) Zellner's  prior:
\begin{equation}
\vbeta|\sigma^2 \sim N(0, \sigma^2\tau^2[\mX^T\mX]^{-1}),
\end{equation}
and (iv) Independent Bayesian ridge regression over 100 replications.  Here the estimates from least squares and the independent ridge are similarly centered around the truth; however, the conjugate ridge and Zellner's priors consistently underestimate the error variance with medians of $\widehat{\sigma}^2 = 0.27$ and $0.55$, respectively. This poor performance is a direct result of the bias induced by adding $p$ ``pseudo-observations'' of $\sigma^2$ as discussed in Section 3.1, which also occurs for the Zellner prior. 

\begin{figure}[ht]
\centering
\includegraphics[width = 0.6\textwidth]{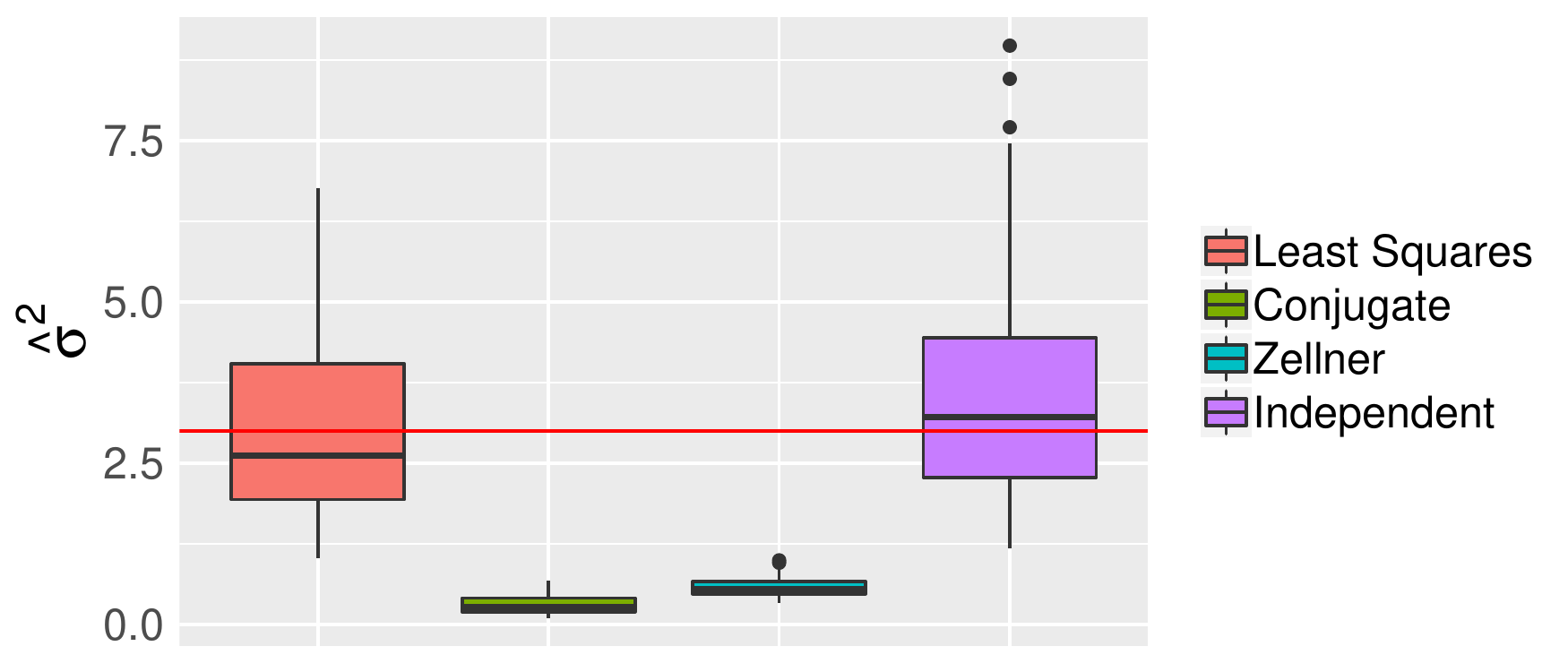}
\caption{Estimated $\widehat{\sigma}^2$ for each procedure over 100 repetitions. The true $\sigma^2 =3$ is the red horizontal line.} \label{ridge_boxplot}
\end{figure}

This phenomenon of underestimating $\sigma^2$ is also seen in EMVS \citep{RG14}, which can be viewed as iterative Bayesian ridge regression with an adaptive penalty term  for each regression coefficient $\beta_j$ instead of the same $\tau^2$ above. EMVS also uses a conjugate prior formulation in which $\vbeta$ depends on $\sigma^2$ \emph{a priori} similarly to \eqref{ridge}.  As in the above ridge regression example, with this prior EMVS yields good estimates for $\vbeta$, but severely underestimates $\sigma^2$.  This {is evident in the Section 4 example} of \citet{RG14} with $n= 100$ and $p=1000$.  There, conditionally on the modal estimate of $\beta$, the associated modal estimate of $\sigma^2$ is 0.0014, a severe underestimate of the true variance $\sigma^2 = 3$. Fortunately, EMVS can be easily modified to use the independent prior specification, as now has been done in the publicly available \texttt{EMVS} R package \citep{EMVSpackage}.  It is interesting to note that the SSVS procedure of \citet{GM93} used the nonconjugate independence prior formulation in lieu of the conjugate prior formulation for the continuous spike-and-slab setup.

A natural question to ask is: how does the poor estimate of the variance in the conjugate case affect the estimated regression coefficients? Insight is obtained by comparing \eqref{conj_means} to the conditional posterior mean of $\vbeta$ in the independent case, given by:
\begin{equation}
\E[\vbeta|\sigma^2, \mY] =\left[\mX^T\mX+ \frac{\sigma^2}{\tau^2}\mI\right]^{-1}\mX^T\mY.\label{ind_mean}
\end{equation}
In \eqref{conj_means}, the Gaussian prior structure allows for $\sigma^2$ to be factorized out so that the estimate of $\vbeta$ does not depend on the variance.  This lack of dependence on the variance is troubling, however, as we want to select fewer variables when the error variance is large making the {signal-to-noise} ratio low. This is in contrast to \eqref{ind_mean} where when $\sigma^2$ is large relative to $\tau^2$, the signal-to-noise ratio is low and so the posterior estimate for $\vbeta$ will be close to zero, reflecting the relative lack of information. This does not occur for the posterior mean of $\vbeta$ in the conjugate case. 

\subsection{What about a prior degrees of freedom adjustment?}

At this point, one may wonder: if the problem seems to be the extra $\sigma^p$ in the denominator, why not use the prior $\pi(\sigma^2) \propto \sigma^{p - 4}$ instead of the right-Haar prior $\pi(\sigma^2) \propto \sigma^{-2}$ that is commonly used? This ``$p$-sigma'' prior then results in the joint prior:
\begin{equation}
\pi(\vbeta|\sigma^2)\pi(\sigma^2) \propto \frac{1}{(\sigma^2)^2}\exp\left\{-\frac{1}{2\sigma^2\tau^2}\lVert \vbeta \rVert^2\right\}. 
\end{equation}
We can again consider the implicit conditional prior on $\sigma^2$:
\begin{align}
\sigma^2|\vbeta \sim IG\left(1, \frac{\lVert \vbeta\rVert^2}{2\tau^2}\right).
\end{align}

For the simulation setup in section 3.2, this alternative conjugate prior would in fact remedy the variance estimates of the conjugate formulation \eqref{ridge}.  However, the $p$-sigma prior suffers from other drawbacks.

In particular, the $p$-sigma prior can actually lead to \emph{overestimation} of the error variance, as opposed to the underestimation by the conjugate prior formulation \eqref{conj} observed in section 3.2.  This overestimation can be seen from the concentration of the prior captured in Proposition \ref{p_sigma_concentration} below.  A similar overestimation phenomenon was also found for a penalized likelihood procedure which implicitly uses a $p$-sigma prior, as we will dicuss in section 4.

\begin{proposition} \label{p_sigma_concentration}
Suppose $\lVert \vbeta \rVert_0 = q$ and $\min_{j, \beta_j\neq 0} \beta_j^2 = K$ for some constant $K\in\R$.  Denote the true variance as $\sigma_0^2$. Then
\begin{align}
P(\sigma^2/\sigma_0^2 \geq \varepsilon|\ \vbeta) \geq 1 - \exp\left(-\frac{qK}{2\varepsilon \sigma_0^2\tau^2}\right).
\end{align}
\end{proposition}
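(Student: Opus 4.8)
The plan is to exploit the fact that the implicit conditional prior $\sigma^2 \mid \vbeta \sim IG\!\left(1, \lVert\vbeta\rVert^2/(2\tau^2)\right)$ has a particularly simple reciprocal: for an inverse-gamma variable with shape $1$ and scale $\beta$, the reciprocal is $\mathrm{Gamma}(1,\beta)$, i.e. exponential with rate $\beta$. Hence, conditionally on $\vbeta$, the precision $1/\sigma^2$ is exponentially distributed with rate $\lVert\vbeta\rVert^2/(2\tau^2)$. Unlike Proposition~\ref{conjugate_concentration}, where an upper tail bound follows from Markov's inequality, here we want a \emph{lower} bound on the left tail of $\sigma^2$ and the mean of $IG(1,\cdot)$ is infinite, so we instead use the exact exponential CDF.

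First I would rewrite the event in terms of the precision:
\begin{equation*}
P\!\left(\sigma^2/\sigma_0^2 \geq \varepsilon \mid \vbeta\right)
= P\!\left(\frac{1}{\sigma^2} \leq \frac{1}{\varepsilon\sigma_0^2}\,\Big|\,\vbeta\right).
\end{equation*}
Applying the exponential distribution function of $1/\sigma^2$ with rate $\lVert\vbeta\rVert^2/(2\tau^2)$ gives
\begin{equation*}
P\!\left(\sigma^2/\sigma_0^2 \geq \varepsilon \mid \vbeta\right)
= 1 - \exp\!\left(-\frac{\lVert\vbeta\rVert^2}{2\tau^2\,\varepsilon\sigma_0^2}\right).
\end{equation*}

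Next I would bound $\lVert\vbeta\rVert^2$ from below. Since exactly $q$ coordinates of $\vbeta$ are nonzero and each nonzero coordinate satisfies $\beta_j^2 \geq K$, we have $\lVert\vbeta\rVert^2 = \sum_{j:\,\beta_j\neq 0}\beta_j^2 \geq qK$. Because $x \mapsto 1 - e^{-x}$ is increasing, replacing $\lVert\vbeta\rVert^2$ by the smaller quantity $qK$ only decreases the right-hand side, yielding
\begin{equation*}
P\!\left(\sigma^2/\sigma_0^2 \geq \varepsilon \mid \vbeta\right)
\geq 1 - \exp\!\left(-\frac{qK}{2\varepsilon\sigma_0^2\tau^2}\right),
\end{equation*}
which is the claimed inequality. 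The only point requiring care is tracking the direction of the inequalities through the reciprocal and through the monotone transformation; everything else is a one-line identification of the distribution. It is worth remarking that this bound is the mirror image of Proposition~\ref{conjugate_concentration}: there $\lVert\vbeta\rVert^2 \leq qK$ drove prior concentration of $\sigma^2$ near zero, whereas here the lower bound $\lVert\vbeta\rVert^2 \geq qK$ forces the prior mass of $\sigma^2/\sigma_0^2$ away from zero (in fact toward $+\infty$, since the bound tends to $1$ for every fixed $\varepsilon$ as $q \to \infty$), explaining the overestimation phenomenon.
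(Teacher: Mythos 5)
Your proposal is correct and takes essentially the same route as the paper: both evaluate the exact tail probability of the $IG\!\left(1, \lVert\vbeta\rVert^2/(2\tau^2)\right)$ prior, obtaining $1-\exp\!\left(-\lVert\vbeta\rVert^2/(2\tau^2\varepsilon\sigma_0^2)\right)$, and then apply $\lVert\vbeta\rVert^2\geq qK$. The only cosmetic difference is that you identify the reciprocal $1/\sigma^2$ as exponential and use its CDF, whereas the paper integrates the inverse-gamma density directly; these are the same computation.
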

\begin{proof}
We have:
\begin{align*}
P(\sigma^2 \geq \varepsilon\sigma_0^2|\ \vbeta) &= \int_{\varepsilon\sigma_0^2}^{\infty} \frac{\lVert \vbeta\rVert^2}{2\tau^2} \frac{1}{u^2}\exp\left(-\frac{\lVert \vbeta\rVert^2}{2\tau^2}\frac{1}{u}\right) du \\
&=1- \exp\left(-\frac{\lVert \vbeta\rVert^2}{2\tau^2\varepsilon \sigma_0 2}\right) \\
&\geq 1 - \exp\left(-\frac{qK}{2\tau^2\varepsilon\sigma_0^2}\right).
\end{align*}\end{proof}

Proposition \ref{p_sigma_concentration} implies that as $q\to \infty$, we can choose arbitrary $\varepsilon > 0$ such that $\sigma^2$ will overestimate the true variance.  As many posterior concentration results require $q \to \infty$, albeit at a much slower rate than $p$ \citep[see, for example,][]{VDP16}, this is particularly troublesome.

Another concern regarding the $p$-sigma prior is more philosophical. As $p$ gets larger, the $p$-sigma prior puts increasing mass on larger and larger values of $\sigma^2$, which does not seem justifiable.

In contrast, the only drawback of the independent prior form is that it can be more computationally intensive. This was seen in Section 3.1 where the independent form did not yield closed form expressions for the posterior means. However, most variable selection and shrinkage problems today use more complicated global-local prior forms for the regression coefficients which are also computationally intensive, no matter whether one uses a conjugate or independent prior.  

For these reasons, we prefer the independent prior forms for the regression coefficients and error variance. We are also of the opinion that the simplicity of the independent prior is in its favor.

\section{Connections with Penalized Likelihood Methods}
Here we pause briefly to examine connections between Bayesian methods and developments in estimating the error variance in the penalized regression literature.  Such connections can be drawn as penalized likelihood methods are implicitly Bayesian; the penalty functions can be interpreted as priors on the regression coefficients so these procedures also in effect yield MAP estimates. 

One of the first papers to consider the unknown error variance case for the Lasso was \citet{S10}, who suggested the following penalized loss function for introducing unknown variance into the frequentist Lasso framework:
\begin{equation}
L_{pen}(\vbeta, \sigma^2) = \frac{\lVert \mY -\mX\vbeta\rVert^2}{2\sigma^2} + \frac{\lambda}{\sigma}\lVert \vbeta \rVert_1 + n\log \sigma. \label{stadler}
\end{equation}
Optimizing this objective function is in fact equivalent to MAP estimation for the following Bayesian model with the $p$-sigma prior discussed in Section 3.2:
\begin{align}
\mY &\sim N(\mX\vbeta, \sigma^2\mI) \\
\pi(\vbeta) &\propto \frac{1}{\sigma^p}\prod_{j=1}^p e^{-\lambda|\beta_j|/\sigma} \notag \\
\pi(\sigma^2) &\propto \sigma^p.\notag
\end{align}
Interestingly, \citet{SZ10} proved that the resulting estimator for the error variance \emph{overestimates} the noise level unless $\lambda\lVert\vbeta^*\rVert_1 /\sigma^* = o(1)$, where $\vbeta^*$ and $\sigma^*$ are the true values of the regression coefficients and error variance, respectively.   Let us examine this condition more closely. Suppose again the true dimension of $\vbeta^*$ is $q$, where $q \ll p$, and $\lambda = A\sqrt{2/n\log p}$ for some constant $A > 1$ (the $\lambda$ required by \citet{SZ12} for consistency). Suppose also that $\max_j |\beta_j| = K$ for some constant $K\in \R$. Then, the error variance estimate from this procedure will be upwardly biased unless 
\begin{equation*}
\lambda\lVert \vbeta^*\rVert \approx qK\sqrt{2/n \log p}  = o(1).
\end{equation*}
That is, the true dimension $q$ cannot at the same time increase at the required rate for posterior contraction \emph{and} result in consistent estimates for the error variance.  Note also the connection to Proposition \ref{p_sigma_concentration} - there, the prior mass on $\sigma^2$ will concentrate on values greater than the true variance $\sigma_0^2$ unless
$\lVert \vbeta\rVert^2/\tau^2 = o(1)$.

To resolve this issue of overestimating the error variance, \citet{SZ12} proposed as an alternative the ``scaled Lasso'', an algorithm which minimizes the following penalized joint loss function via coordinate descent:
\begin{equation}
L_{\lambda}(\vbeta, \sigma) = \frac{\lVert\mY - \mX\vbeta\rVert^2}{2\sigma} + \frac{n\sigma}{2} + \lambda\sum_{j=1}^p |\beta_j|. \label{scaled_lasso}
\end{equation}
This loss function is a penalized version of Huber's concomitant loss function, and so may be viewed as performing robust high-dimensional regression. It is also equivalent to the ``square-root Lasso'' of \citet{B14}.  Minimization of the loss function \eqref{scaled_lasso} can be viewed as MAP estimation for the Bayesian model (with a slight modification):
\begin{align}
\mY &\sim N(\mX\vbeta, \sigma\mI) \label{scaled_lasso_bayesian} \\
\pi(\vbeta) &\propto \prod_{j=1}^p \frac{\lambda}{2}e^{-\lambda|\beta_j|} \notag\\
\sigma &\sim \text{Gamma}(n+1, n/2). \notag
\end{align}
Note that to interpret the scaled Lasso as a Bayesian procedure, $\sigma$, rather than $\sigma^2$, plays the role of the variance in  \eqref{scaled_lasso_bayesian}.  \citet{SZ12} essentially then re-interpret $\sigma$ as the standard deviation again after optimization of \eqref{scaled_lasso}. This re-interpretation can be thought of as an ``unbiasing'' step for the error variance.  It is a little worrisome, however, that the implicit prior on the error variance is very informative: as $n\to\infty$, this Gamma prior concentrates around $\sigma = 2$. 

\citet{SZ12} proved that the scaled Lasso estimate $\widehat{\sigma}(\mX, \mY)$ is consistent for the ``oracle'' estimator 
\begin{equation}
\sigma^* = \frac{\lVert \mY - \mX\vbeta^*\rVert}{\sqrt{n}}, \label{oracle_sigma}
\end{equation}
where $\vbeta^*$ are the true regression coefficients, for the value of $\lambda_0 \propto \sqrt{2/n\log p}$.  This estimator \eqref{oracle_sigma} is called the oracle because it treats the true regression coefficients as if they were known. The term $\lVert \mY - \mX\vbeta^*\rVert^2$ is then simply the sum of normal random variables, of which we calculate the variance as $\sum_{i=1}^n\varepsilon_i^2/n$. 

More recently, \citet{SZ13} proposed a different value of $\lambda_0$ which achieves tighter error bounds than in their previous work. Specifically, they propose 
\begin{equation}
\lambda_0 =\sqrt{2}L_n(k/p)
\end{equation}
with $L_n(t) = \frac{1}{\sqrt{n}}\Phi^{-1}(1-t)$ where $\Phi$ is the standard Gaussian cdf and $k$ is the solution to $k = L_1^4(k/p) + 2L_1^2(k/p)$.

\section{Global-Local Shrinkage}
In this section, we examine how {the use of a conjugate prior} affects the machinery of the Gaussian global-local shrinkage paradigm.  The general structure for this class of priors is given by:
\begin{align}
\beta_j &\sim N(0, \tau^2\lambda_j^2), \quad \lambda_j^2 \sim \pi(\lambda_j^2),\quad j = 1,\dots, p \label{gl_setting}\\
\tau^2&\sim \pi(\tau^2)\notag
\end{align}
where $\tau^2$ is the ``global'' variance and $\lambda_j^2$ is the ``local'' variance. 
Note that taking $\tau^2$ to be the same as the error variance $\sigma^2$ would result in a conjugate prior in this setting.  This is exactly what was done in the original formulation of Bayesian lasso by \citet{PC08}, which can be recast in the Gaussian global-local shrinkage framework as follows (notation changed slightly for consistency):
\begin{align}
\mY |\vbeta, \sigma^2 &\sim N_n(\mX\vbeta, \sigma^2 \mI_n) \label{bayesian_lasso}\\
\beta_j|\sigma^2, \lambda_j^2 &\sim N(0, \sigma^2\lambda_j^2),\quad \pi(\lambda_j^2) =\frac{u^2}{2}e^{-u^2\lambda_j^2/2},\quad  j = 1,\dots, p  \notag\\
\pi(\sigma^2) &\propto \sigma^{-2}. \notag
\end{align}
In the conjugate formulation in \eqref{bayesian_lasso}, $\sigma^2$ plays the dual role of representing the error variance as well as acting as the global shrinkage parameter.  This is problematic in light of the mechanics of global-local shrinkage priors.  Specifically, \citet{PS10} give the following requirements for the global and local variances in \eqref{gl_setting}: $\pi(\tau^2)$ should have substantial mass near zero to shrink all the regression coefficients so that the vast majority are negligible; and $\pi(\lambda_j^2)$ should have heavy tails so that it can be quite large, allowing for a few large coefficients to ``escape'' the heavy shrinkage of the global variance.
 
This heuristic is formalized in much of the shrinkage estimation theory. For the normal means problem where $\mX = \mI_n$ and $\vbeta\in\R^n$,  \citet{VDP16} prove that the following condition results in the posterior recovering nonzero means with the optimal rate:
\begin{enumerate}[(i)]
\item $\pi(\lambda_j^2)$ should be a uniformly regular varying function which does not depend on $n$; and
\item $\tau^2 = \frac{q}{n}\log(n/q)$, where $q$ is number of non-zero $\beta_j$.
\end{enumerate}
The uniformly regular varying property in (i) intuitively preserves the ``flatness'' of the prior even under transformations of the parameters, unlike traditional ``non-informative'' priors \citep{B16}. In preserving these heavy tails, such priors for $\lambda_j^2$ allow for a few large coefficients to be estimated.   The condition (ii) encourages $\tau^2$ to tend to zero which would be a concerning property if it were also the error variance.  These results suggest we cannot identify the error variance with the global variance parameter on the regression coefficients as in \eqref{bayesian_lasso}: it cannot simultaneously both shrink all the regression coefficients and be a good estimate of the residual variance. Finally, we note that \citet{H09} also considered the independent case for the Bayesian lasso in which the error variance is not identified with the global variance. 

An alternative conjugate formulation for Gaussian global-local shrinkage priors is to instead include three variance terms in the prior for $\beta_j$: the error variance, $\sigma^2$, the global variance, $\tau^2$, and the local variance, $\lambda_j^2$.  For example, \citep{C10} give the conjugate form of the horseshoe prior:
\begin{align}
\beta_j|\sigma^2,\tau^2, \lambda_j^2 &\sim N(0, \sigma^2\tau^2\lambda_j^2), \quad \lambda_j^2 \sim \pi(\lambda_j^2),\quad j = 1,\dots, p  \label{conj_horseshoe}\\
\tau^2&\sim \pi(\tau^2), \notag\\
 \pi(\sigma^2) &\propto \sigma^{-2}.\notag
\end{align}
This prior formulation \eqref{conj_horseshoe} remedies the aforementioned issue in the Bayesian lasso as it separates the roles of the error variance and global variance. However, this prior structure can still be problematic for error variance estimation. 

Consider the conditional posterior mean of $\sigma^2$ for the model \eqref{conj_horseshoe}:
\begin{align}
\E[\sigma^2|\mY, \vbeta, \tau^2, \lambda_j^2] = \frac{\lVert \mY - \mX\vbeta\rVert^2 + \sum_{j=1}^p\beta_j^2/\lambda_j^2\tau^2}{n + p - 2}.\label{sigma_horseshoe}
\end{align}
Proposition \ref{global_local_sigma} highlights that, given the true regression coefficients, the conditional posterior mean of $\sigma^2$ underestimates the oracle variance \eqref{oracle_sigma} when $\vbeta$ is sparse.

\begin{proposition}\label{global_local_sigma}
Consider the global-local prior formulation given in \eqref{conj_horseshoe}. Denote the true vector of regression coefficients by $\vbeta^*$ where $\lVert \vbeta^*\rVert_0 = q$.  Suppose $\max_j\beta_j^{*2} = M_1$ for some constant $M_1\in\R$.  Denote the oracle estimator for $\sigma$ given in \eqref{oracle_sigma} by $\sigma^{*}$ and suppose $\sigma^* = O(1)$. Suppose also that for $j\in \{1, \dots, p\}$ with $\beta_j\neq0$, we have $\tau^2\lambda_j^2 > M_2$ for some $M_2 \in \R$.
Then
\begin{align}
\E[\sigma^2|\mY, \vbeta^*, \tau^2, \lambda_j^2] \leq \frac{n\sigma^{*2}}{n+p-2} + \frac{q M_1/M_2}{n + p - 2}. \label{global_local_lower_bound}
\end{align}
In particular, as $p/n \to \infty$ and $q/p\to 0$, we have
\begin{align}
\E[\sigma^2|\mY, \vbeta^*, \tau^2, \lambda_j^2] = o(1). 
\end{align} 
\end{proposition}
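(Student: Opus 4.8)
The plan is to substitute the true coefficient vector $\vbeta^*$ directly into the closed-form expression \eqref{sigma_horseshoe} for the conditional posterior mean and then bound the two pieces of the numerator separately. First I would evaluate the numerator of \eqref{sigma_horseshoe} at $\vbeta = \vbeta^*$ and observe that the residual term is exactly $\lVert \mY - \mX\vbeta^*\rVert^2 = n\sigma^{*2}$ by the definition of the oracle estimator in \eqref{oracle_sigma}. The remaining sum $\sum_{j=1}^p \beta_j^{*2}/(\tau^2\lambda_j^2)$ ranges effectively only over the $q$ indices in the support of $\vbeta^*$, since the summand vanishes wherever $\beta_j^* = 0$.

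Next I would control that sum using the two stated hypotheses: $\beta_j^{*2} \leq M_1$ for every $j$ (from $\max_j \beta_j^{*2} = M_1$), and $\tau^2\lambda_j^2 > M_2$ on the support of $\vbeta^*$. Together these give $\beta_j^{*2}/(\tau^2\lambda_j^2) < M_1/M_2$ for each of the $q$ nonzero terms, hence $\sum_{j=1}^p \beta_j^{*2}/(\tau^2\lambda_j^2) \leq qM_1/M_2$. Combining with the residual term and dividing by $n+p-2$ yields
\[
\E[\sigma^2|\mY, \vbeta^*, \tau^2, \lambda_j^2] \leq \frac{n\sigma^{*2} + qM_1/M_2}{n+p-2} = \frac{n\sigma^{*2}}{n+p-2} + \frac{qM_1/M_2}{n+p-2},
\]
which is precisely \eqref{global_local_lower_bound}.

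For the limiting claim I would bound the two terms of \eqref{global_local_lower_bound} individually. Since $\sigma^* = O(1)$ and $n/(n+p-2) \to 0$ as $p/n \to \infty$, the first term is $o(1)$; and since $q/(n+p-2) \leq q/(p-2)$ together with $q/p \to 0$, the second term is $o(1)$ as well. Adding the two bounds gives $\E[\sigma^2|\mY, \vbeta^*, \tau^2, \lambda_j^2] = o(1)$. The argument is a direct computation, so I do not anticipate a genuine obstacle; the only point requiring care is verifying that the three hypotheses ($\max_j \beta_j^{*2} = M_1$, $\tau^2\lambda_j^2 > M_2$ on the support, and $\sigma^* = O(1)$) are exactly what is needed so that both numerator contributions stay bounded while the denominator $n+p-2$ grows fast enough to drive the ratio to zero.
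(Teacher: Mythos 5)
Your proof is correct and takes essentially the same route the paper intends: the paper states Proposition \ref{global_local_sigma} as an immediate consequence of the conditional posterior mean \eqref{sigma_horseshoe} (it supplies no separate proof), and your steps --- substituting $\vbeta^*$, using $\lVert \mY - \mX\vbeta^*\rVert^2 = n\sigma^{*2}$ from \eqref{oracle_sigma}, bounding the $q$ nonzero terms by $M_1/M_2$ each, and then sending $n/(n+p-2)$ and $q/(n+p-2)$ to zero --- are exactly the intended computation.
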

Given the mechanics of global-local shrinkage priors, the assumption  in Proposition \ref{global_local_sigma} that the the term $\tau^2\lambda_j^2$ is bounded from below for non-zero $\beta_j$  is not unreasonable. This is because for large $\beta_j$, the local variance $\lambda_j^2$ must be large enough to counter the extreme shrinkage effect of $\tau^2$. Indeed, the prior for $\lambda_j^2$ must have ``heavy enough'' tails to enable this phenomenon.

We should note that Proposition \ref{global_local_sigma} illustrates the poor performance of the posterior mean \eqref{sigma_horseshoe} given the true regression coefficients $\vbeta^*$, whereas the horseshoe procedure does not actually threshold the negligible $\beta_j$ to zero in the posterior mean of $\vbeta$.  For these small $\beta_j$, the term $\tau^2\lambda_j^2$ may be very small and potentially counteract the underestimation phenomenon. However, it is still troubling to use an estimator for the error variance that does not behave as the oracle estimator when the true regression coefficients are known. This is in contrast to the independent prior formulation where the conditional posterior mean of $\sigma^2$ is simply:
\begin{equation}
\E[\sigma^2|\mY, \vbeta] = \frac{\lVert \mY - \mX\vbeta\rVert^2}{n - 2}. 
\end{equation}

Note also that the problem of underestimation of $\sigma^2$ is exacerbated for modal estimation under the prior \eqref{conj_horseshoe}. This is because modal estimators often threshold small coefficients to zero and so the term $\sum_{j=1}^p \beta_j^2/\lambda_j^2\tau^2$  becomes negligible as in Proposition \ref{global_local_sigma}.  As MAP estimation using global-local shrinkage priors is becoming more common \citep[see, for example,][]{B17}, we caution against the use of these conjugate prior forms.

A different argument for using conjugate priors with the horseshoe is given by \citet{PV17}. They advocate for the model \eqref{conj_horseshoe}, arguing that it leads to a prior on the effective number of non-zero coefficients which does not depend on $\sigma^2$ and $n$. However, this quantity is derived from the posterior of $\vbeta$ and so does not take into account the uncertainty inherent in the variable selection process.  As a thought experiment: suppose that we know the error variance, $\sigma^2$, and number of observations, $n$. If the error variance is too large and the number of observations are too few, we would not expect to be able to say much about $\vbeta$ at all, and this intuition should be reflected in the effective number of non-zero coefficients. This point is similar to our discussion at the end of Section 3.2 regarding estimation of $\vbeta$. 

As before, we recommend {independent priors} on both the error variance and regression coefficients to both prevent distortion of the global-local shrinkage mechanism and to obtain better estimates of the error variance.

\section{Spike-and-Slab Lasso with Unknown Variance}
\subsection{Spike-and-Slab Lasso}
We now turn to the Spike-and-Slab Lasso \citep[SSL,][]{RG18} and consider how to incorporate the unknown variance case.  As the name suggests, the SSL involves placing a mixture prior on the regression coefficients $\vbeta$, where each $\beta_j$ is assumed \emph{a priori} to be drawn from either a Laplacian ``spike'' concentrated around zero (and hence be considered negligible), or a diffuse Laplacian ``slab" (and hence may be large).
Thus the hierarchical prior over $\vbeta$ and the latent indicator variables $\vgamma = (\gamma_1,\dots, \gamma_p)$ is given by
\begin{align}
\pi(\vbeta|\vgamma) &\sim \prod_{j=1}^p \left[\gamma_j\varphi_1(\beta_j) + (1-\gamma_j)\varphi_0(\beta_j)\right] \label{prior}, \\
\pi(\vgamma|\theta) &= \prod_{j=1}^p \theta^{\gamma_j}(1-\theta)^{1-\gamma_j} \quad\text{and} \quad \theta \sim \text{Beta}(a, b),
\end{align}
where $\varphi_1(\beta) = \frac{\lambda_1}{2}e^{-|\beta|\lambda_1}$ is the slab distribution and $\varphi_0(\beta) = \frac{\lambda_0}{2}e^{-|\beta|\lambda_0}$ is the spike ($\lambda_1 \ll \lambda_0$), and we have used the common exchangeable beta-binomial prior for the latent indicators.

We note that despite the use of the spike-and-slab prior typically associated with ``two-group'' Bayesian variable selection methods, the Spike-and-Slab Lasso can also be seen as a ``one-group'' method as the spike density is continuous.   Similarly to the Bayesian lasso, the Spike-and-Slab Lasso can be cast in the Gaussian global-local shrinkage framework using the Gaussian scale-mixture representation of the Laplace density:
\begin{align}
\pi(\vbeta|\tau_j^2) &\sim N(0, \tau_j^2)\notag \\
\pi(\tau_j^2|\gamma_j) &= \gamma_j\frac{\lambda_1^2}{2}e^{-\lambda_1\tau_j^2/2} + (1-\gamma_j)\frac{\lambda_0^2}{2}e^{-\lambda_0\tau^2_j/2}. \label{ssl_global_local}
\end{align}
Although in formulation \eqref{ssl_global_local} there is only a local variance and not a global variance, \citet{VDP16} show that the Spike-and-Slab Lasso can be interpreted similarly to global-local shrinkage priors. In this interpretation, the parameter $\theta$ (the proportion of non-zero $\beta_j$) essentially plays the role of the global-variance in that $\theta$ shrinks the majority of $\vbeta$ to zero.

\citet{RG18} recast this hierarchical model into a penalized likelihood framework, allowing for the use of existing efficient algorithms for modal estimation while retaining the adaptivity inherent in the Bayesian formulation. The regression coefficients $\vbeta$ are then estimated by
\begin{equation}
 \widehat{\vbeta} = \argmax_{\vbeta \in \R^p}\left\{-\frac{1}{2}\lVert \mY - \mX\vbeta\rVert^2 + pen(\vbeta) \right\} \label{ssl_fixed}
 \end{equation}
where 
\begin{equation}
pen(\vbeta) = \log\left[\frac{\pi(\vbeta)}{\pi(\vzero_p)}\right], \quad \pi(\vbeta) = \int_0^1 \prod_{j=1}^p [\theta \psi_1(\beta_j) + (1-\theta)\psi_0(\beta_j) ]d\pi(\theta).
\end{equation}

\citet{RG18} note a number of advantages in using a mixture of Laplace densities in \eqref{prior}, instead of the usual mixture of Gaussians as has been standard in the Bayesian variable selection literature. First, the Laplacian spike serves to automatically threshold modal estimates of $\beta_j$ to zero when $\beta_j$ is small, much like the Lasso. However, unlike the Lasso, the slab distribution in the prior serves to stabilize the larger coefficients so they are not downward biased. Additionally, the heavier Laplacian tails of the slab distribution yields optimal posterior concentration rates \citep{R18}.

A possible route for adding the unknown variance case to the SSL procedure is to follow the prior framework of \citet{PC08} in their Bayesian Lasso. There, \citet{PC08} used the following prior for the regression coefficients:
\begin{align}
\pi(\vbeta|\sigma^2) \propto \prod_{j=1}^p\frac{\lambda}{2\sigma}e^{-\lambda|\beta_j|/\sigma}.
\end{align}
In the next section, we illustrate why an analogous conjugate prior formulation for the Spike-and-Slab Lasso would be a poor choice. Afterwards, we introduce the SSL with unknown variance which utilizes an independent prior framework.

\subsection{The failure of the conjugate prior}

The conjugate prior formulation for the Spike-and-Slab Lasso is given by:
\begin{align}
\pi(\vbeta|\vgamma, \sigma^2) &\sim \prod_{j=1}^p \left(\gamma_j\frac{\lambda_1}{2\sigma}e^{-|\beta_j|\lambda_1/\sigma} + (1-\gamma_j)\frac{\lambda_0}{2\sigma}e^{-|\beta_j|\lambda_0/\sigma}\right)\\
\vgamma|\theta &\sim \prod_{j=1}^p \theta^{\gamma_j}(1-\theta)^{1-\gamma_j},\quad \theta\sim \text{Beta}(a,b)\\
p(\sigma^2) &\propto \sigma^{-2}. 
\end{align}
We find the posterior modes of our parameters using the EM algorithm, the details of which can be found in the appendix.  At the $(k+1)$th iteration, the EM updates are:
\begin{align}
\vbeta^{(k+1)} &=\argmin_{\vbeta}\left\{\frac{1}{2\sigma^{(k)}}\lVert \mY - \mX\vbeta\rVert^2   +\sum_{j=1}^p|\beta_j|\lambda^*(\beta_j^{(k)}/\sigma^{(k)}; \theta^{(k)})\right\} \label{ssl_conj}\\ 
\theta^{(k+1)}&= \frac{\sum_{j=1}^p p^*(\beta_j^{(k)}/\sigma^{(k)};\theta^{(k)}) + a - 1}{a+b+p-2}  \label{sigma_ssl_conj}\\
\sigma^{(k+1)} &= \frac{Q+ \sqrt{Q^2 +4(\lVert \mY- \mX\vbeta^{(k)}\rVert^2 )(n+p+2)} }
{2(n+p+2)}
\end{align}
where 
\begin{align}
Q &= \sum_{i=1}^p|\beta_j^{(k)}|\lambda^*(\beta_j^{(k)}/\sigma^{(k)}; \theta^{(k)}), \\
p^*(\beta; \theta) &= \left[1 + \frac{\lambda_0}{\lambda_1}\left(\frac{1-\theta}{\theta}\right)\exp\{-|\beta|(\lambda_0 - \lambda_1)\}\right]^{-1}, \label{pstar} \\
 \lambda^*(\beta; \theta) &= \lambda_1p^*(\beta; \theta) + \lambda_0(1-p^*(\beta;\theta)).  
\end{align}
Let us take a closer look at the estimator of $\sigma$.  Following the line of reasoning in \citet{SZ10}, an expert with oracle knowledge of the true regression coefficients $\vbeta^*$ would estimate the noise level by the oracle estimator:
\begin{equation}
\sigma^{*2} = \frac{\lVert \mY - \mX\vbeta^*\rVert}{n}.
\end{equation}
However, the maximum \emph{a posteriori} estimate of $\sigma$ at the true values of $\vbeta^*, \vgamma^*$ is given by
\begin{align}
\widehat{\sigma}_{MAP} = \tau + \sqrt{\tau^2 + \frac{(\sigma^*)^2}{1+p/n + 2/n} }
\end{align}
where $\tau = \lambda_1\lVert \vbeta^* \rVert_1/[2(n+p  + 2)]$. Here we see that if $n\to\infty$ with $p$ fixed, we have $\widehat{\sigma}_{MAP} \to \sigma^*$. If, however, we have $p/n \to \infty$ and $q/p \to 0$,  where the underlying sparsity  is $q=\lVert \vbeta^* \rVert_0$, we have $\widehat{\sigma}_{MAP} \to 0$.  Thus, similarly to the discussion of global-local shrinkage priors in Section 5, we will severely underestimate the error variance.  As before, the remedy is to use the independent prior on $\sigma^2$ and $\vbeta$. 

\subsection{Spike-and-Slab Lasso with Unknown Variance}

We now introduce the Spike-and-Slab Lasso with unknown variance, which considers the regression coefficients and error variance to be \emph{a priori} independent. The hierarchical model is
\begin{align}
\pi(\vbeta|\vgamma) &\sim \prod_{j=1}^p \left(\gamma_j\frac{\lambda_1}{2}e^{-|\beta_j|\lambda_1} + (1-\gamma_j)\frac{\lambda_0}{2}e^{-|\beta_j|\lambda_0}\right)\\
\vgamma|\theta &\sim \prod_{j=1}^p \theta^{\gamma_j}(1-\theta)^{1-\gamma_j},\quad \theta\sim \text{Beta}(a, b)\\
p(\sigma^2) &\sim \sigma^{-2}. 
\end{align}
The log posterior, up to an additive constant, is given by
\begin{equation}
L(\vbeta, \sigma^2) = -\frac{1}{2\sigma^2}\lVert \mY - \mX\vbeta\rVert^2 - (n+2)\log\sigma + \sum_{j=1}^p pen(\beta_j|\theta_j) \label{ssl_obj}
\end{equation}
where, for $j = 1,\dots, p$,
\begin{align}
pen(\beta_j|\theta_j) &= -\lambda_1|\beta_j| + \log[p^*(0;\theta_j)/p^*(\beta_j;\theta_j)], \label{ssl_pen}\\
\theta_j&= \E[\theta|\vbeta_{\backslash j}]
\end{align}
and $p^*(\beta;\theta)$ is as defined in \eqref{pstar}. For large $p$, \citet{RG18} note that the conditional expectation $\E[\theta|\vbeta_{\backslash j}]$ is very similar to $\E[\theta |\vbeta]$ and so for practical purposes we treat them as equal and denote ${\theta}_{\beta} = \E[\theta|\vbeta]$.

To find the modes of \eqref{ssl_obj}, we pursue a similar coordinate ascent strategy to \citet{RG18}, cycling through updates for each $\beta_j$ and $\sigma^2$ while updating the conditional expectation ${\theta}_{\beta}$.  This conditional expectation does not have an analytical expression; however, \citet{RG18} note that it can be approximated by
\begin{equation}
{\theta}_{\beta} \approx  \frac{a + \lVert \vbeta\rVert_0 }{a + b +p}.  \label{theta_approx}
\end{equation} 

We now outline the estimation strategy for $\vbeta$.  As noted in Lemma 3.1 of \citet{RG18}, there is a simple expression for the derivative of the SSL penalty:
\begin{equation}
\frac{\partial pen(\beta_j|\theta_{\beta})}{\partial |\beta_j|} \equiv - \lambda^*(\beta_j; {\theta_{\beta}})
\end{equation}
where
\begin{equation}
\lambda^*(\beta_j;{\theta}_{\beta}) = \lambda_1p^*(\beta_j;{\theta}_{\beta}) + \lambda_0[1-p^*(\beta_j;{\theta}_{\beta})].
\end{equation}
Using the above expression, the Karush-Kuhn-Tucker (KKT) conditions yield the following necessary condition for the global mode $\widehat{\vbeta}$:
\begin{equation}
\widehat{\beta}_j = \frac{1}{n}\left[|z_j| - \sigma^2\lambda^*(\widehat{\beta}_j;{\theta}_{\beta})\right]_+ sign(z_j), \quad j = 1,\dots, p\label{KKT}
\end{equation}
where $z_j = \mX_j^T(\mY - \sum_{k\neq j}^p \widehat{\beta}_k \cdot \mX_k)$ and we assume that the design matrix $\mX$ has been centered and standardized to have norm $\sqrt{n}$.  The condition \eqref{KKT} is very close to the familiar soft-thresholding operator for the Lasso, except that the penalty term $\lambda^*(\beta_j; \theta)$ differs for each coordinate.  Similarly to other non-convex methods, this enables \emph{selective shrinkage} of the coefficients, mitigating the bias issues associated with the Lasso. Also similarly to non-convex methods however, \eqref{KKT} is not a sufficient condition for the global mode. This is particularly problematic when the posterior landscape is highly multimodal, a consequence of  $p \gg n$ and large $\lambda_0$. To eliminate many of these suboptimal local modes from consideration, \citet{RG18} develop a more refined characterization of the global mode. This characterization follows the arguments of \citet{ZZ12} and can easily be modified for the unknown variance case of the SSL, detailed in Proposition \ref{global_mode}.
\begin{proposition}
The global mode $\widehat{\vbeta}$ satisfies
\begin{align}
\widehat{\beta}_j =
\begin{cases}
0 &\text{when } |z_j| \leq \Delta\\
\frac{1}{n}[|z_j| - \sigma^2\lambda^*(\widehat{\beta}_j;{\theta}_{\beta})]_+\text{sign}(z_j) &\text{when } |z_j| > \Delta
\end{cases}
\end{align}
where 
\begin{equation}
\Delta \equiv \inf_{t>0}[nt/2 - \sigma^2pen(t|\theta_{\beta})/t].\label{delta}
\end{equation}
\label{global_mode}
\end{proposition}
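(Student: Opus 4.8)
The plan is to adapt the global-mode characterization of \citet{RG18}, which itself follows \citet{ZZ12}, by carrying the extra factor $\sigma^2$ through the penalty term. Throughout we condition on $\sigma^2$ and on the plug-in weight $\theta_\beta$ (profiled in as in \citet{RG18}), so that the relevant optimization is coordinate-separable in the usual sense: if $\widehat\vbeta$ is a global maximizer of $L(\cdot,\sigma^2)$ in \eqref{ssl_obj}, then for each $j$ the scalar $\widehat\beta_j$ must maximize $\beta_j\mapsto L$ with the other coordinates held at $\widehat\beta_k$. Expanding $\lVert\mY-\mX\vbeta\rVert^2$ in $\beta_j$ with $\lVert\mX_j\rVert^2 = n$, discarding the $\vbeta$-free term $-(n+2)\log\sigma$, and multiplying by $-\sigma^2$ (which turns the maximization into a minimization), the $j$th univariate problem becomes
\[
\min_{\beta_j\in\R}\ g(\beta_j),\qquad g(\beta_j) \;:=\; \frac{n}{2}\Big(\beta_j-\frac{z_j}{n}\Big)^2 - \sigma^2\,pen(\beta_j\,|\,\theta_\beta),
\]
where $z_j = \mX_j^T(\mY-\sum_{k\neq j}\widehat\beta_k\mX_k)$. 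Since $pen(\cdot\,|\,\theta_\beta)$ is even with $pen(0\,|\,\theta_\beta)=0$ and $pen\le 0$, we may assume $z_j\ge 0$ (with any minimizer then nonnegative), $g$ is coercive, and so a global minimizer exists which is either $0$ or a stationary point of $g$ on $(0,\infty)$.

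Next I would record the two local conditions and compare them. On $(0,\infty)$, the derivative identity $pen'(t\,|\,\theta_\beta) = -\lambda^*(t;\theta_\beta)$ from Lemma~3.1 of \citet{RG18} turns $g'(\beta_j)=0$ into $n\beta_j = z_j - \sigma^2\lambda^*(\beta_j;\theta_\beta)$, which is exactly the thresholded fixed-point equation $\widehat\beta_j = \frac{1}{n}[\,|z_j|-\sigma^2\lambda^*(\widehat\beta_j;\theta_\beta)\,]_+\,\mathrm{sign}(z_j)$ (the positive part being automatic at an interior stationary point). To decide between $0$ and a stationary point, compare objective values: the inequality $g(t)\ge g(0)$ for a given $t>0$ simplifies, after cancelling the common term $z_j^2/(2n)$ and dividing by $t$, to $|z_j| \le nt/2 - \sigma^2\,pen(t\,|\,\theta_\beta)/t$. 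Hence $0$ is the global minimizer of $g$ iff this holds for \emph{every} $t>0$, i.e.\ iff $|z_j|\le \inf_{t>0}[\,nt/2 - \sigma^2 pen(t\,|\,\theta_\beta)/t\,] = \Delta$; and if $|z_j|>\Delta$ there is some $t_0>0$ with $g(t_0)<g(0)$, so the global minimizer is nonzero, hence an interior stationary point, hence solves the displayed thresholded equation. Assembling the two cases and restoring $\mathrm{sign}(z_j)$ gives the stated dichotomy.

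I expect the only genuine obstacle to be this comparison step — establishing that $0$ beats every positive stationary point \emph{precisely} on $\{|z_j|\le\Delta\}$. Because the spike makes $pen(\cdot\,|\,\theta_\beta)$ nonconvex near the origin (a kink together with a change of curvature), $g$ can possess several interior stationary points, so a naive first-order argument does not suffice; instead one must work through the ``secant-slope'' reformulation $|z_j|\le nt/2 - \sigma^2 pen(t\,|\,\theta_\beta)/t$ and take its infimum over $t$, which is exactly the device of \citet{ZZ12}. One also checks that $\Delta$ is finite and strictly positive, so the dichotomy is nontrivial. Everything else — completing the square, coercivity of $g$, evenness of $pen$, and the identity $pen'=-\lambda^*$ — is routine, and since $\sigma^2$ enters only as a multiplier on the penalty, the fixed-variance argument of \citet{RG18} transfers essentially verbatim with $pen$ replaced by $\sigma^2\,pen$.
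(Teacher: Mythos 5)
Your proposal is correct and follows exactly the route the paper intends: the paper gives no explicit proof, deferring to the arguments of \citet{ZZ12} as adapted in \citet{RG18}, and your argument is precisely that fixed-variance derivation (coordinate-wise completion of the square, the identity $pen'=-\lambda^*$, and the secant-slope comparison of $0$ against interior stationary points) with $pen$ replaced by $\sigma^2\,pen$ throughout. No gaps.
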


Unfortunately, computing \eqref{delta} can be difficult.  Instead, we seek an approximation to the threshold $\Delta$. A useful upper bound is $\Delta\leq \sigma^2\lambda^*(0;\theta_{\beta})$ \citep{ZZ12}. However, when $\lambda_0$ gets large, this bound is too loose and can be improved.  The improved bounds are given in Proposition \ref{delta_bounds}, the analogue of Proposition 3.2 of \citet{RG18} for the unknown variance case.   Before stating the result, the following function is useful to simplify exposition:
\begin{equation}
g(x;\theta) = [\lambda^*(x;\theta) - \lambda_1]^2 + \frac{2n}{\sigma^2}\log[p^*(x;\theta)].
\end{equation}
\begin{proposition}
When $\sigma(\lambda_0 - \lambda_1) > 2\sqrt{n}$ and $g(0;\theta_{\beta})>0$ the threshold $\Delta$ is bounded by
	$$\Delta^L < \Delta < \Delta^U,$$
	where
	\begin{align}
	\Delta^L &= \sqrt{2n\sigma^2\log[1/p^*(0;\theta_{\beta})] - \sigma^4 d_j} + \sigma^2\lambda_1, \\
	\Delta^U &= \sqrt{2n\sigma^2\log[1/p^*(0;\theta_{\beta})]} + \sigma^2\lambda_1
	\end{align}
	and $$0 < d_j < \frac{2n}{\sigma^2} - \left(\frac{n}{\sigma^2(\lambda_0 - \lambda_1)} - \frac{\sqrt{2n}}{\sigma}\right)^2.$$ 
	\label{delta_bounds}
\end{proposition}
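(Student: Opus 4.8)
The plan is to study the scalar function $\phi(t) := nt/2 - \sigma^2\, pen(t\mid\theta_\beta)/t$, whose infimum over $t>0$ is $\Delta$. Since $pen(t\mid\theta_\beta) = -\lambda_1 t - \log[p^*(t;\theta_\beta)/p^*(0;\theta_\beta)]$ for $t>0$, one may rewrite
\[
\phi(t) = \sigma^2\lambda_1 + \frac{nt}{2} + \frac{\sigma^2}{t}\,G(t), \qquad G(t) := \log\frac{p^*(t;\theta_\beta)}{p^*(0;\theta_\beta)},
\]
and record that $G\ge 0$ on $(0,\infty)$ with $G(0)=0$, $G(t)\to A := \log[1/p^*(0;\theta_\beta)]$ as $t\to\infty$, and $G'(t)=\lambda^*(t;\theta_\beta)-\lambda_1\ge 0$ (this last identity is Lemma 3.1 of \citet{RG18}, $\partial pen/\partial|\beta| = -\lambda^*$). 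The argument then proceeds in three parts: the easy upper bound; an exact identity for $\Delta$ in terms of an interior minimizer; and a lower estimate for the resulting residual, which is the genuinely technical part.

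For the upper bound I would evaluate $\phi$ at $t_U := \sigma\sqrt{2A/n}$, the minimizer of $nt/2 + \sigma^2 A/t$. Because $G(t_U) = A + \log p^*(t_U;\theta_\beta) < A$ strictly (as $p^*(t_U;\theta_\beta)<1$), this gives $\Delta \le \phi(t_U) < \sigma^2\lambda_1 + \sqrt{2n\sigma^2 A} = \Delta^U$. Moreover the hypothesis $g(0;\theta_\beta)>0$ rearranges to $\sigma^2(\lambda^*(0;\theta_\beta)-\lambda_1) > \sqrt{2n\sigma^2 A}$, i.e.\ $\Delta^U < \sigma^2\lambda^*(0;\theta_\beta)$; combined with $\lim_{t\downarrow 0}\phi(t) = \sigma^2\lambda^*(0;\theta_\beta)$ and $\phi(t)\to\infty$, this forces $\inf\phi$ to be attained at an interior stationary point $t^\star>0$.

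Differentiating the displayed form of $\phi$ and using $\phi'(t^\star)=0$ yields $n(t^\star)^2/(2\sigma^2) = G(t^\star) - t^\star G'(t^\star)$. Feeding this back into $\phi(t^\star)$ gives the selection fixed point $\Delta = nt^\star + \sigma^2\lambda^*(t^\star;\theta_\beta)$; eliminating $t^\star$ between these two relations, and using $G(t^\star) = A + \log p^*(t^\star;\theta_\beta)$ together with the definition of $g$, produces the clean identity
\[
\Delta = \sigma^2\lambda_1 + \sqrt{2n\sigma^2 A + \sigma^4\, g(t^\star;\theta_\beta)}.
\]
The upper bound above now reads $g(t^\star;\theta_\beta)<0$, so putting $d_j := -g(t^\star;\theta_\beta) > 0$ gives $\Delta = \Delta^L$ with this $d_j$, and the work reduces to showing $d_j$ lies in the stated range, i.e.\ $-g(t^\star;\theta_\beta) < \tfrac{2n}{\sigma^2} - \big(\tfrac{n}{\sigma^2(\lambda_0-\lambda_1)} - \tfrac{\sqrt{2n}}{\sigma}\big)^2$.

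This last step is the crux. Viewing $g$ as a function of $p = p^*(x;\theta_\beta)\in(0,1)$, namely $p\mapsto (\lambda_0-\lambda_1)^2(1-p)^2 + \tfrac{2n}{\sigma^2}\log p$, one checks its only interior critical point is $p_+$ solving $(\lambda_0-\lambda_1)^2 p_+(1-p_+) = n/\sigma^2$ — and $\sigma(\lambda_0-\lambda_1)>2\sqrt n$ is precisely what makes such a $p_+$ exist — so that $g$ restricted to $[p^*(0;\theta_\beta),1)$ is minimized either at the endpoint (where $g = g(0;\theta_\beta)>0$) or at $p_+$. At $p_+$ I would bound $\log p_+ \ge 1 - 1/p_+$, use the defining quadratic to replace $(\lambda_0-\lambda_1)^2(1-p_+)^2$ by $n(1-p_+)/(\sigma^2 p_+)$, and reduce matters to estimating $(1-p_+)/p_+$; solving the quadratic for $p_+$ and simplifying turns this into exactly the stated right-hand side, with the strict inequality $\sigma(\lambda_0-\lambda_1)>2\sqrt n$ supplying the slack that absorbs the loss in $\log p_+\ge 1-1/p_+$. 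The main obstacle is precisely here: $g$ is unbounded below as $p\downarrow 0$, so one must genuinely use that the minimizer keeps $p^*(t^\star;\theta_\beta)\ge p^*(0;\theta_\beta)$ bounded away from $0$, and then the bookkeeping that matches $n(1-p_+)/(\sigma^2 p_+)$ to $\tfrac{2n}{\sigma^2} - \big(\tfrac{n}{\sigma^2(\lambda_0-\lambda_1)} - \tfrac{\sqrt{2n}}{\sigma}\big)^2$ requires care.
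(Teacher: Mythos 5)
Your argument is correct and is essentially the proof the paper relies on: the paper gives no proof of this proposition, deferring to Proposition 3.2 of \citet{RG18}, and your route --- the exact identity $\Delta = \sigma^2\lambda_1 + \sqrt{2n\sigma^2\log[1/p^*(0;\theta_\beta)] + \sigma^4 g(t^\star;\theta_\beta)}$ at the interior minimizer $t^\star$ (whose existence follows from $g(0;\theta_\beta)>0$), with $d_j=-g(t^\star;\theta_\beta)$ then bounded via the interior critical point $p_+$ of $p\mapsto(\lambda_0-\lambda_1)^2(1-p)^2+(2n/\sigma^2)\log p$ --- is precisely that argument with $\sigma^2$ carried through. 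The bookkeeping you flag does close: writing $v=\sqrt{n}/(\sigma(\lambda_0-\lambda_1))\le 1/2$ so that $p_+(1-p_+)=v^2$ and $p_+\ge 1/2$, the bound $\log p_+>1-1/p_+$ reduces the required inequality to $(1-p_+)/p_+ = v^2/p_+^2 \le 2\sqrt{2}\,v - v^2$, i.e.\ $v(1+1/p_+^2)\le 2\sqrt{2}$, which holds since $v(1+1/p_+^2)\le \tfrac{1}{2}\cdot 5 < 2\sqrt{2}$.
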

Thus, when $\lambda_0$ is large and consequently $d_j \to 0$, the lower bound on the threshold approaches the upper bound, yielding the approximation $\Delta \approx \Delta^U$.  We additionally note the central role that the error variance plays in the thresholds in Proposition \ref{delta_bounds}. As $\sigma^2$ increases, the thresholds also increase, making it more difficult for regression coefficients to be selected. This is exactly what we want when the signal to noise ratio is small.

Bringing this all together, we incorporate this refined characterization of the global mode into the update for the coefficients via the generalized thresholding operator of \citet{M11}:
\begin{equation}
\widetilde{S}(z, \lambda, \Delta) = \frac{1}{n}(|z| - \lambda)_+sign(z)\mathbb{I}(|z| > \Delta).
\end{equation}
The coordinate-wise update is then 
\begin{equation}
\widehat{\beta}_j\leftarrow \widetilde{S}(z_j, \widehat{\sigma}^2\lambda^*(\widehat{\beta}_j;\widehat{\theta}_{\beta}), \Delta) \label{beta_update}
\end{equation}
where
\begin{equation}
\Delta =
\begin{cases}
 \sqrt{2n\widehat{\sigma}^2\log[1/p^*(0;\widehat{\theta}_{\beta})]} + \widehat{\sigma}^2\lambda_1 &\text{if } g(0;\widehat{\theta}_{\beta}) >0, \\
\widehat{\sigma}^2\lambda^*(0;\widehat{\theta}_{\beta}) &\text{otherwise.}
\end{cases} \label{delta_update}
\end{equation}
The conditional expectation ${\theta}_{\beta}$ is updated according to \eqref{theta_approx}.

Finally, given the most recent update of the coefficient vector $\widehat{\vbeta}$, the update for the error variance $\sigma^2$ is a simple Newton step:
\begin{align}
\widehat{\sigma}^2 \leftarrow \frac{\lVert \mY - \mX\widehat{\vbeta}\rVert^2}{n+2}. \label{sigma_update}
\end{align}
Note that this update for $\sigma^2$ is a \emph{conditional} mode, not a marginal mode, and so it does not underestimate the error variance in the same way as \eqref{conj_sigma_marginal}. Indeed, conditional on the true regression coefficients, \eqref{sigma_update} is essentially the oracle estimator \eqref{oracle_sigma}. However, although we retain the update \eqref{sigma_update} \emph{during} optimization in order to iterate between the modes of $\vbeta$ and $\sigma^2$, after the algorithm has converged, our final estimator of $\sigma^2$ is obtained as
\begin{align}
\widehat{\sigma}^2_{adj} = \frac{\lVert \mY -  \mX\widehat{\vbeta}\rVert^2}{n - \widehat{q}}, \label{sigma_adj}
\end{align}
where $\widehat{q} = \lVert \widehat{\vbeta}\rVert_0$. Note that \eqref{sigma_adj} incorporates an appropriate degrees of freedom adjustment to account for the fact that $\widehat{\vbeta}$ is an estimate of the unknown true $\vbeta$.

In principle, both $\sigma^2$ and the conditional expectation ${\theta}_{\beta}$ should be updated after each $\beta_j$, $j=1,\dots, p$.  In practice, however, there will be little change after one coordinate update and so both $\sigma^2$ and $\theta_{\beta}$ can be updated after $M$ coordinates are updated, where $M$ is the update frequency given in Algorithm 1. The default implementation updates $\sigma^2$ and $\theta_{\beta}$ after every $M = 10$ coordinate updates.

\subsection{Implementation}

In the SSL with known variance, \citet{RG18} propose a ``dynamic posterior exploration'' strategy whereby the slab parameter $\lambda_1$ is held fixed and the spike parameter $\lambda_0$ is gradually increased to approximate the ideal point mass prior. Holding the slab parameter fixed serves to stabilize the non-zero coefficients, unlike the Lasso which applies an equal level of shrinkage to all regression coefficients.  Meanwhile, gradually increasing $\lambda_0$ over a ``ladder'' of values serves to progressively threshold negligible coefficients. More practically, the dynamic strategy aids in mode detection: when $(\lambda_1 - \lambda_0)^2 \leq 4/\sigma^2$, the objective is convex \citep{RG18}. In fact, when $\lambda_0 = \lambda_1$, it is equivalent to the Lasso.  As $\lambda_0$ is increased, the posterior landscape becomes multimodal, but using the solution from the previous value of $\lambda_0$ as a ``warm start" allows the procedure to more easily find modes.  Thus, progressively increasing $\lambda_0$ acts as an annealing strategy.

When $\sigma^2$ is treated as unknown, the successive warm start strategy of \citet{RG18} will require additional intervention. For small $\lambda_0 \approx \lambda_1$, there may be many negligible but non-zero $\beta_j$ included in the model. This severe overfitting results in all the variation in $\mY$ being explained by the model, forcing the estimate of the error variance, $\widehat{\sigma}^2$ to zero.  If this suboptimal solution is propagated for larger values of $\lambda_0$, the optimization routine will remain ``stuck'' in that part of the posterior landscape.  As an implementation strategy to avoid this absorbing state, we keep the estimate of $\sigma^2$ fixed at an initial value until $\lambda_0$ reaches a value at which the algorithm converges in less than 100 iterations. We then reinitialize $\vbeta$ and $\sigma^2$ and being to simultaneously update $\sigma^2$ for the next largest $\lambda_0$ value in the ladder.  The intuition behind this strategy is that we first find a promising part of the posterior landscape and then update $\vbeta$ and $\sigma^2$. 

For initialization, we follow \citet{RG18} and initialize the regression coefficients, $\vbeta$, at zero and $\theta_0 = 0.5$.  For the error variance, we devised an initialization strategy that is motivated by the prior for $\sigma^2$ used in \citet{CGM10}.  Those authors used a scaled-inverse-$\chi^2$ prior for the error variance with degrees of freedom $\nu = 3$ and scale parameter chosen such that the sample variance of $\mY$ corresponds to the 90$th$ quantile of the prior.  This is a natural choice as the variance of $\mY$ is the maximum possible value for the error variance.  We set the initial value of $\sigma^2$ to be the mode of this scaled-inverse-$\chi^2$ distribution, a strategy which we have found to be effective in practice. 

The entire implementation strategy is summarized in Algorithm 1. 

\begin{algorithm}
\small \begin{flushleft}
Input: grid of increasing $\lambda_0$ values $I = \{\lambda_0^{1},\dots, \lambda_0^{L}\}$, update frequency $M$\\[4pt]
Initialize: $\widehat{\vbeta}_0 = \vzero_p$, $\widehat{\sigma}_0^2$, $\widehat{\theta}_0 = 0.5$\\[4pt]
For $l = 1, \dots, L$:
\begin{enumerate}
\item  Initialize: $\widehat{\vbeta}_l = \widehat{\vbeta}_{l-1}$, $\widehat{\theta}_l = \widehat{\theta}_{l-1}$, $\widehat{\sigma}_l^{2} = \widehat{\sigma}_{l-1}^2$ 
\item Set $k_l = 0$
\item While \textsf{diff} $ > \varepsilon$
\begin{enumerate}[(i)]
\item Increment $k_l$
\item For $ s= 1, \dots, \lfloor p/M \rfloor$:
\begin{enumerate}
\item Update
\begin{equation*}
\Delta \leftarrow
\begin{cases}
\sqrt{2n\widehat{\sigma}_l^2\log[1/p^*(0;\widehat{\theta}_l)]} + \widehat{\sigma}_l^2\lambda_1 &\text{if } g(0;\widehat{\theta}_l) >0\\
\widehat{\sigma}_l^2\lambda^*(0;\widehat{\theta}_l) &\text{otherwise}
\end{cases}
\end{equation*}
\item For $j = 1,\dots, M$: update $$\beta_{l(s-1)M + j} \leftarrow \widetilde{S}(z_j, \widehat{\sigma}^2\lambda^*(\beta_{l(s-1)M + j};\widehat{\theta}_l), \Delta)$$
\item Update $\widehat{\theta}_l \leftarrow (a + \lVert \widehat{\vbeta}_l \rVert_0)/( a + b +p)$
\item If $k_{l-1} < 100$:
\begin{enumerate}
\item Update $\widehat{\sigma}_l^2 \leftarrow \lVert \mY - \mX\widehat{\vbeta}_l\rVert^2/(n+2)$
\end{enumerate}
\item \textsf{diff} $= \lVert \vbeta^{k_l} - \vbeta^{k_l - 1}\rVert_2$
\end{enumerate}
\end{enumerate}
 \end{enumerate}
 \end{flushleft}
\caption{Spike-and-Slab Lasso with unknown variance}
\end{algorithm}

\subsection{Scaled Spike-and-Slab Lasso}

An alternative approach for extending the SSL for unknown variance is to follow the scaled Lasso framework of \citet{SZ12}.  In their original scaled Lasso paper, \citet{SZ12} note that their loss function can be used with many penalized likelihood procedures, including the MCP and the SCAD penalties.  Here, we develop the \emph{scaled Spike-and-Slab Lasso}.  The loss function for the scaled SSL is the same as that of the scaled Lasso but with a different penalty:
\begin{equation}
L(\vbeta, \sigma^2) = -\frac{1}{2\sigma}\lVert \mY- \mX\vbeta\rVert^2 - \frac{n\sigma}{2} + \sum_{j=1}^p pen(\beta_j|\theta_{\beta})\label{scaledSSL}
\end{equation}
where $pen(\beta_j|\theta_{\beta})$ is as defined in \eqref{ssl_pen} and again we use the approximation \eqref{theta_approx} for the conditional expectation $\theta_{\beta}$. In using this loss function, we are of course departing from the Bayesian paradigm and simply {considering} this procedure as a penalized likelihood method with a spike-and-slab inspired penalty.

The algorithm to find the modes of \eqref{scaledSSL} is very similar to Algorithm 1, the only difference being we replace all $\sigma^2$ terms in the updates \eqref{beta_update} and \eqref{delta_update} with $\sigma$.  This is because the refined thresholds for the coefficients are derived using the KKT conditions where the only difference between the two procedures is $\sigma$ vs. $\sigma^2$.

The Newton step for $\sigma^2$ is only very slightly different from the SSL with unknown variance:
\begin{equation}
\widehat{\sigma}^2 \leftarrow \frac{\lVert \mY - \mX\widehat{\vbeta}\rVert^2}{n}.
\end{equation}

How do we expect the scaled Spike-and-Slab Lasso to compare to the Spike-and-Slab Lasso with unknown variance? The threshold levels $\Delta$ for the scaled SSL will be smaller after replacing $\sigma^2$ with $\sigma$. This may potentially result in more false positives being included in the scaled SSL model. In terms of variance estimation, the updates for $\sigma^2$ are effectively the same; the only differences we should expect are those arising from a more saturated estimate for $\vbeta$. These hypotheses are examined in the simulation study in the next section.

\subsection{Simulation Study}
We now compare the Spike-and-Slab Lasso with unknown variance with several penalized likelihood methods, including the original Spike-and-Slab Lasso with fixed variance of \citet{RG18} as well as the scaled Spike-and-Slab Lasso outlined in the previous section. We investigate both the efficacy of the SSL with unknown variance and the benefits of simultaneously estimating the regression coefficients $\vbeta$ and error variance $\sigma^2$ in variable selection.  We do not consider the SSL with the $p$-sigma prior as the objective is similar to \citet{S10} (albeit with an adaptive penalty) and so we would expect similar overestimation of $\sigma^2$ as proved by \citet{SZ12}. We consider three different simulation settings.

For the first simulation setting, we consider the same simulation setting of \citet{RG18} with $n=100$ and $p=1000$ but use an error variance of $\sigma^2 = 3$ instead of $\sigma^2 =1$. The data matrix $\mX$ is generated from a multivariate Gaussian distribution with mean $\vzero_p$ and a block-diagonal covariance matrix $\mSigma = \text{bdiag}(\widetilde{\Sigma}, \dots, \widetilde{\Sigma})$ where $\widetilde{\Sigma} = \{\widetilde{\sigma}\}_{i,j=1}^{50}$ with $\widetilde{\sigma}_{ij} = 0.9$ if $i\neq j$ and $\widetilde{\sigma}_{ii} = 1$.  The true vector $\vbeta_0$ is constructed by assigning regression coefficients $\{-2.5, -2, -1.5, 1.5, 2, 2.5\}$ to $q=6$ entries located at $\{1, 51, 101, 151, 201, 251\}$ and setting to zero the remaining coefficients. Hence, there are 20 independent blocks of 50 highly correlated predictors where the first 6 blocks each contain only one active predictor. The response was generated as in \eqref{regression} with error variance $\sigma^2 = 3$.

We compared the Spike-and-Slab Lasso with unknown variance to the fixed variance Spike-and-Slab Lasso with two settings: (i) $\sigma^2 =1$, and (ii) $\sigma^2 = 3$, the true variance. The prior settings for $\theta$ were $a = 1, b = p$.  The slab parameter was set to $\lambda_1 = 1$.  For the spike parameter, we used a ladder $\lambda_0 \in I = \{1, 2, \dots, 100\}$. 

Additional methods compared were the scaled SSL from Section 3.4, the Lasso \citep{F10}, the scaled Lasso \citep{SZ12}, the Adaptive Lasso \citep{Z06}, SCAD \citep{FL01}, and MCP \citep{Z10}.

The analysis was repeated 100 times with new covariates and responses generated each time.  For each, the metrics recorded were: the Hamming distance (HAM) between the support of the estimated $\vbeta$ and the true $\vbeta_0$; the prediction error (PE), defined as
\begin{align}
\text{PE} = \lVert \mX\vbeta_0 - \mX\widehat{\vbeta}\rVert^2;
\end{align}
the number of false negatives (FN); the number of false positives (FP); the number of true positives (TP); Matthew's Correlation Coefficient (MCC), defined as:
\begin{align}
\text{MCC} = \frac{TP \times TN - FP \times FN}{\sqrt{(TP + FP)(TP + FN)(TN + FP)(TN + FN)}};
\end{align}
the dimension of the estimated $\vbeta$ ($\widehat{q}$); the percentage of times the method found the correct model  (COR); and the time in seconds (TIME).  The average of these metrics for each method over the 100 repetitions are displayed in Table \ref{sim_study1}.  

\begin{table}[t]
\centering \small
\hspace*{-2cm}\begin{tabular}{lllllllll}
  \hline\hline \\[-0.3cm]
 & HAM &PE&MCC &TP &FP &FN  &COR &TIME \\ 
  \hline \\[-0.3cm]
SSL (fixed $\sigma^2 = 3$) & \bf1.1 {\scriptsize (0.1)} & \bf39.6 {\scriptsize (3.7)} &\bf 0.91 {\scriptsize (0.01)} &\bf 5.5 {\scriptsize (0.1)} &\bf 0.5 {\scriptsize (0.1)} &\bf 0.5 {\scriptsize (0.1)} &\bf58 &\bf 0.03 {\scriptsize (0.00)} \\ 
  SSL (unknown $\sigma^2$) & 1.2 {\scriptsize (0.2)} & 43.4 {\scriptsize (3.9)} & 0.90 {\scriptsize (0.01)} & 5.4 {\scriptsize (0.1)} & 0.6 {\scriptsize (0.1)} & 0.6 {\scriptsize (0.1)} & 55 & 0.04 {\scriptsize (0.00)} \\ 
  Scaled SSL & 2.0 {\scriptsize (0.2)} & 65.8 {\scriptsize (5.0)} & 0.84 {\scriptsize (0.01)} & 5.2 {\scriptsize (0.1)} & 1.2 {\scriptsize (0.1)} & 0.8 {\scriptsize (0.1)} & 32 & 0.07 {\scriptsize (0.00)} \\ 
  SSL (fixed $\sigma^2 = 1$) & 4.5 {\scriptsize (0.3)} & 114.9 {\scriptsize (5.3)} & 0.69 {\scriptsize (0.02)} & 4.8 {\scriptsize (0.1)} & 3.3 {\scriptsize (0.2)} & 1.2 {\scriptsize (0.1)} & 5 & 0.17 {\scriptsize (0.01)} \\ 
  MCP** & 7.0 {\scriptsize (0.4)} & 186.1 {\scriptsize (7.0)} & 0.48 {\scriptsize (0.02)} & 3.1 {\scriptsize (0.1)} & 4.1 {\scriptsize (0.3)} & 2.9 {\scriptsize (0.1)} & 1 & 0.32 {\scriptsize (0.00)} \\ 
  Adaptive LASSO & 8.1 {\scriptsize (0.5)} & 92.0 {\scriptsize (4.1)} & 0.60 {\scriptsize (0.02)} & 4.8 {\scriptsize (0.1)} & 6.9 {\scriptsize (0.5)} & 1.2 {\scriptsize (0.1)} & 1 & 5.36 {\scriptsize (0.11)} \\ 
  SCAD & 11.2 {\scriptsize (0.6)} & 124.4 {\scriptsize (6.2)} & 0.47 {\scriptsize (0.02)} & 4.0 {\scriptsize (0.1)} & 9.2 {\scriptsize (0.5)} & 2.0 {\scriptsize (0.1)} & 0 & 0.39 {\scriptsize (0.01)} \\ 
  MCP* & 11.5 {\scriptsize (0.4)} & 181.4 {\scriptsize (6.3)} & 0.35 {\scriptsize (0.02)} & 2.8 {\scriptsize (0.1)} & 8.3 {\scriptsize (0.3)} & 3.2 {\scriptsize (0.1)} & 0 & 0.32 {\scriptsize (0.00)} \\ 
  Scaled LASSO & 16.1 {\scriptsize (0.4)} & 302.4 {\scriptsize (9.6)} & 0.42 {\scriptsize (0.01)} & 4.5 {\scriptsize (0.1)} & 14.6 {\scriptsize (0.4)} & 1.5 {\scriptsize (0.1)} & 0 & 0.51 {\scriptsize (0.01)} \\ 
  LASSO & 30.9 {\scriptsize (0.6)} & 111.0 {\scriptsize (2.5)} & 0.36 {\scriptsize (0.01)} & 5.4 {\scriptsize (0.1)} & 30.3 {\scriptsize (0.6)} & 0.6 {\scriptsize (0.1)} & 0 & 0.40 {\scriptsize (0.01)} \\ \hline\hline
\end{tabular}\hspace*{-2cm}
\caption{Average metrics over 100 repetitions for each of the procedures, ordered by increasing Hamming distance. Standard errors are shown in parentheses. *\texttt{ncvreg} implementation using cross-validation over a one-dimensional grid with a default value of the second tuning parameter $\gamma$.  **hard thresholding tuning with $\gamma = 1.0001$ and cross-validation over $\lambda$.}\label{sim_study1}
\end{table}

We can see that the Spike-and-Slab Lasso with the variance fixed and equal to the truth ($\sigma^2 = 3$) performs the best in terms of the Hamming distance, prediction error, and MCC.  Encouragingly, the Spike-and-Slab Lasso with unknown variance performs almost as well as the ``oracle'' version where the true variance is known. The SSL with unknown variance in turn performs better than a naive implementation of the SSL with fixed variance ($\sigma^2 =1$). We note that the prediction error for the latter implementation is higher than the Adaptive Lasso and SCAD; however, these frequentist methods use cross-validation to choose their regularization parameter and so are optimizing for prediction to the possible detriment of other metrics; the SSL ($\sigma^2=1$) still has fewer false positives and a higher MCC.  However, both the SSL ($\sigma^2=3$) and unknown $\sigma^2$ have smaller prediction error than the rest of the methods, including those which use cross-validation, which highlights the predictive gains afforded by variance estimation. 

Following from the discussion in Section 3.4, we can see that the scaled SSL indeed finds more false positives than the SSL with unknown variance. This is a result of the smaller thresholds in estimating the regression coefficients. We can see that the scaled Lasso significantly reduces the number of false positives found as compared to the Lasso; however, the issues with the Lasso penalty remain.

\begin{figure}
\centering
\includegraphics[width = 0.6\textwidth]{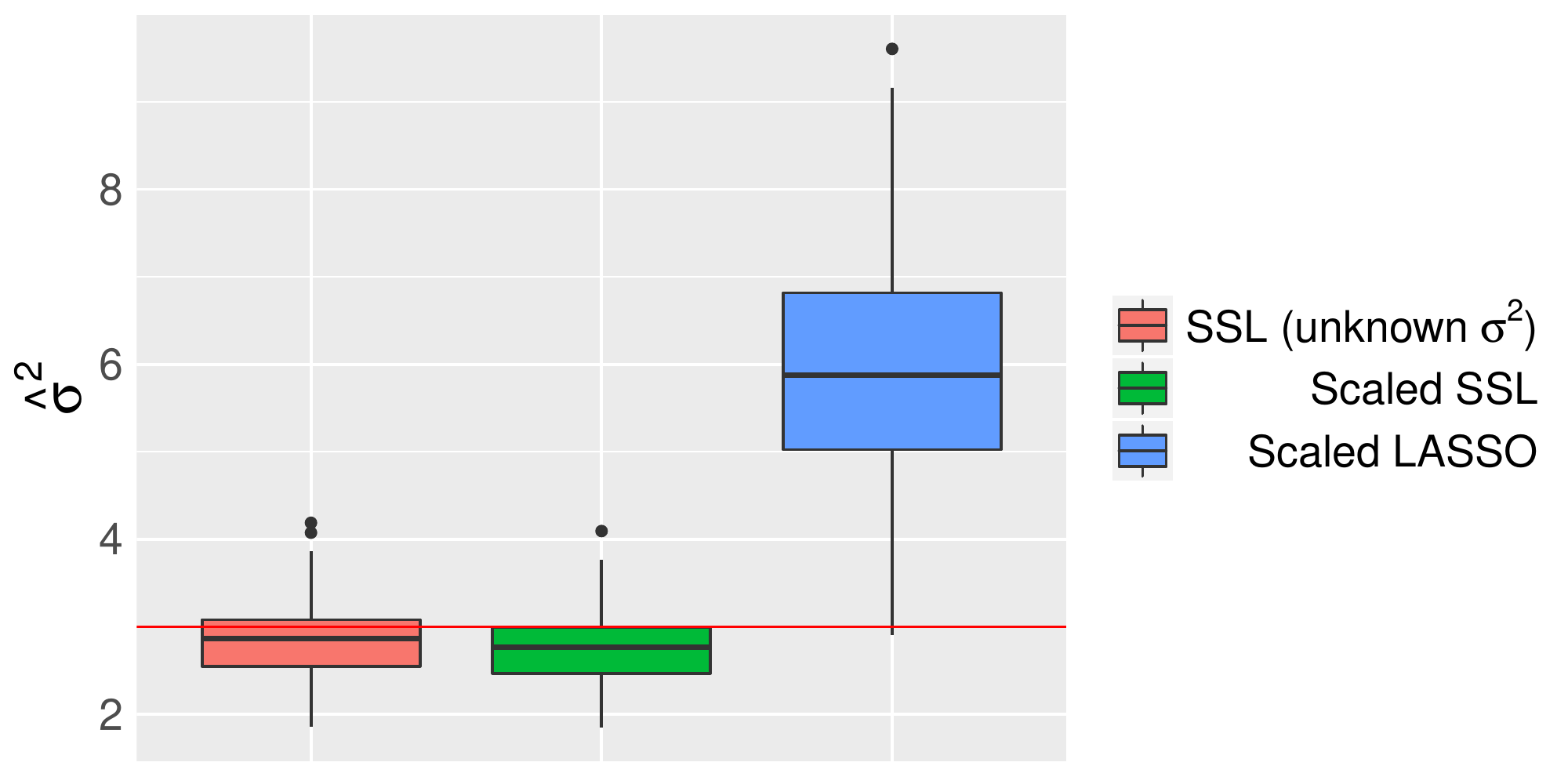}
\caption{Estimated $\widehat{\sigma}_{adj}^2$ over 100 repetitions. The true variance $\sigma^2 = 3$ is the red horizontal line.}
\label{sigmas}
\end{figure}

Figure \ref{sigmas} shows the variance estimates over the 100 repetitions for the SSL with unknown variance, the scaled SSL and the scaled Lasso. For the SSL with unknown $\sigma^2$, these are the estimates \eqref{sigma_adj}. For the scaled SSL and the scaled Lasso variance estimates, we also applied a degrees of freedom correction similarly to \eqref{sigma_adj} using the number of non-zero coefficients found by each method. The variance estimates from the SSL (unknown $\sigma^2$) have a median of 2.87 and standard error 0.04.  Meanwhile, the scaled SSL slightly underestimates the variance with a median of 2.76 and standard error 0.04, as expected from the larger number of false positives observed in Table \ref{sim_study1}. Finally, the scaled Lasso highly inflates the variance with a median of 5.88 and standard error 0.14.

\section{Protein activity data}

We now apply the Spike-and-Slab Lasso with unknown variance to the protein activity data set from \citet{CP98}.  Following those authors, we code the categorical variables by indicator variables, consider all main effects and two-way interactions, and quadratic terms for the continuous variables. This results in a linear model with $p = 88$ potential predictors. The sample size is $n = 96$.  We assess the performance of the Spike-and-Slab Lasso with unknown variance in both variable selection and prediction.

\subsection{Variable selection}
As an approximation to the ``truth'', we use the Bayesian adaptive sampling algorithm  \citep[BAS,][]{C11}, which has previously been applied successfully to this dataset.  BAS gives posterior inclusion probabilities (PIP) for each of the $p$ potential predictors from which we determined the median probability model (MPM: predictors with PIP $ > 0.5$). The median probability model found by BAS consisted of $q = 7$ predictors: 
\begin{itemize}
\item interaction of protein concentration and detergent T (\texttt{con:detT})
\item detergent T (\texttt{detT})
\item interaction of  buffer TRS and detergent N (\texttt{bufTRS:detN})
\item protein concentration (\texttt{con})
\item interaction of buffer P04 and temperature (\texttt{bufPO4:temp})
\item detergent N (\texttt{detN})
\item interaction of detergent N and temperature (\texttt{detN:temp})
\end{itemize}

For the SSL with unknown variance, we used the same settings as the simulation study with $\lambda_1 = 1$ and $\lambda_0 \in \{1,2,\dots, n\}$.  The procedure found a model with $\widehat{q} = 6$ predictors, including four of the MPM: \texttt{con, detN, bufTRS:detN, con:detT}. Additionally, instead of \texttt{detT}, the SSL with unknown variance found the interaction of pH with detergent T (\texttt{pH:detT}). The correlation between \texttt{detT} and \texttt{pH:detT} is 0.988, rendering the two predictors essentially exchangeable. Thus, 5 out of the 6 predictors found by the SSL with unknown variance matched with the benchmark MPM. 

For the SSL with known variance, we fixed $\sigma^2 = 0.24$. This is the mean of the scaled-inverse-$\chi^2$ distribution induced by the variance of the response, as detailed in Section 6.4. For the protein data, the sample variance of the response is 0.41 and so fixing $\sigma^2 = 1$ overestimates the variance, resulting in no signal being found.   The SSL with this fixed variance found $\widehat{q} = 2$ predictors: one of the MPM (\texttt{detT}) and one not in the MPM but having a correlation of 0.735 with \texttt{detN}.  

Here, we can see the benefit of simultaneously estimating the error variance; the estimate from SSL with unknown variance was $\widehat{\sigma}^2 = 0.167$, resulting in a less sparse solution.

\subsection{Predictive Performance}

We now compare the predictive performance of the SSL with unknown variance with the penalized regression methods from the simulation study in Section 6.5 using 8-fold cross validation. We split the data into $K = 8$ sets and denote each set by $S_k$, $k = 1,\dots, K$. The 8-fold cross-validation error is given by:
\begin{equation}
CV = \frac{1}{K}\sum_{k = 1}^K\sum_{i \in S_k} \left[y_i -\vx_i\widehat{\vbeta}_{\backslash k}\right]^2
\end{equation}
where $\widehat{\vbeta}_{\backslash k}$ is the estimated regression coefficient using the data in $S_k^C$.   We repeated this procedure 100 times and display the resulting cross-validation errors in Figure \ref{protein_boxplot}.  We do not display the results from the scaled Lasso in Figure \ref{protein_boxplot} as there were a number of outliers: the cross-validation error for the scaled Lasso was greater than 25 in 20\% of the replications.

\begin{figure}
\centering
\includegraphics[width = 0.6\textwidth]{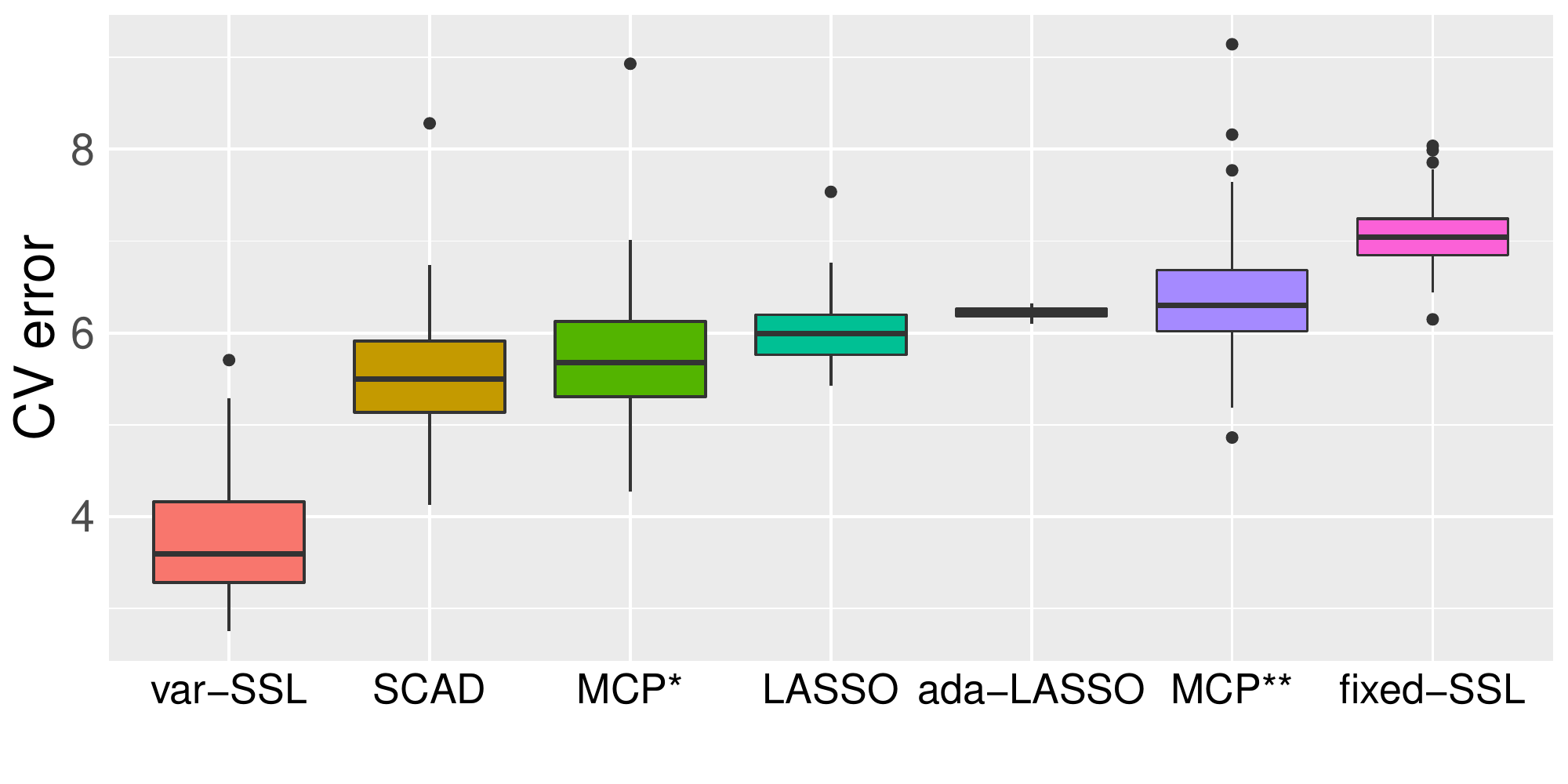}
\caption{Boxplots of 8-fold cross-validation error over 100 replications for each of the methods (from left to right): 1. SSL (unknown $\sigma^2$). 2. SCAD. 3. MCP (\texttt{ncvreg}). 4. LASSO. 5. Adaptive LASSO. 6. MCP ($\gamma = 1.0001$). 7. SSL (fixed $\sigma^2$).}
\label{protein_boxplot}
\end{figure}

We can see that the SSL with unknown variance has the smallest cross-validation error.  This highlights the gains in predictive performance that can be achieved by simultaneously estimating the error variance and regression coefficients. This result is also very encouraging given that all the other methods (except for the SSL with fixed variance) use cross-validation in choosing their regularization parameters.  This also explains the slightly worse performance of the SSL with fixed variance, which we expect would be competitive if we were to also choose the regularization parameters with cross-validation.  However, SSL with fixed variance still performs well without the need for computationally intensive cross-validation to choose the parameters.

\section{Conclusion}

In this paper, we have shown that conjugate continuous priors for Bayesian variable selection can lead to underestimation of the error variance when (i) $\vbeta$ is sparse; and (ii) when $p$ is of the same order as, or larger than, $n$. This is because such priors implicitly add $p$ ``pseudo-observations'' of $\sigma^2$ which shift prior mass on $\sigma^2$ towards zero. Conjugate priors for linear regression are often motivated by the invariance principle of \citet{J61}. Revisiting this work however, we highlighted that Jeffreys' himself cautioned against applying his invariance principle in multivariate problems.  Following Jeffreys. we recommended priors which treat the regression coefficients and error variance as independent.  

We then proceeded to extend the Spike-and-Slab Lasso of \citet{RG18} to the unknown variance case, using an independent prior for the variance. We showed that this procedure for the Spike-and-Slab Lasso with unknown variance performs almost as well empirically as the SSL where the true variance is known. We additionally compared the Spike-and-Slab Lasso with unknown variance to a popular frequentist method to estimate the variance in high dimensional regression: the scaled Lasso.  In simulation studies, the SSL with unknown variance performed much better than the scaled Lasso and additionally outperformed the ``scaled Spike-and-Slab Lasso'', a variant of the latter procedure but with the Spike-and-Slab Lasso penalty.  On a protein activity dataset, the SSL with unknown variance performed well for both variable selection and prediction.  In particular, the SSL with unknown variance exhibited smaller cross-validation error than other penalized likelihood procedures which choose their regularization parameters based on cross-validation. This highlights the predictive benefit of simultaneous variance estimation.  The unknown variance implementation of the SSL is provided in the publicly available R package \texttt{SSLASSO} \citep{SSLpackage}. Code to reproduce the results in this paper is also available at \url{https://github.com/gemma-e-moran/variance-priors}.

\bibliographystyle{ba}
\bibliography{references} 

\begin{thebibliography}{37}
\newcommand{\enquote}[1]{``#1''}
\expandafter\ifx\csname natexlab\endcsname\relax\def\natexlab#1{#1}\fi
\expandafter\ifx\csname url\endcsname\relax
  \def\url#1{{\tt #1}}\fi
\expandafter\ifx\csname urlprefix\endcsname\relax\def\urlprefix{URL }\fi
\ifx\endbibitem\undefined \let\endbibitem\relax\fi

\bibitem[{Bayarri et~al.(2012)Bayarri, Berger, Forte, Garc{\'\i}a-Donato
  et~al.}]{BB12}
Bayarri, M.~J., Berger, J.~O., Forte, A., Garc{\'\i}a-Donato, G., et~al.
  (2012).
\newblock \enquote{Criteria for Bayesian model choice with application to
  variable selection.}
\newblock {\em The Annals of {S}tatistics\/}, 40(3): 1550--1577.
\endbibitem

\bibitem[{Belloni et~al.(2014)Belloni, Chernozhukov, Wang et~al.}]{B14}
Belloni, A., Chernozhukov, V., Wang, L., et~al. (2014).
\newblock \enquote{Pivotal estimation via square-root lasso in nonparametric
  regression.}
\newblock {\em The Annals of Statistics\/}, 42(2): 757--788.
\endbibitem

\bibitem[{Berger et~al.(1998)Berger, Pericchi, and Varshavsky}]{B98}
Berger, J.~O., Pericchi, L.~R., and Varshavsky, J.~A. (1998).
\newblock \enquote{Bayes factors and marginal distributions in invariant
  situations.}
\newblock {\em Sankhya Ser. A\/}, 60: 307--321.
\endbibitem

\bibitem[{Bhadra et~al.(2016)Bhadra, Datta, Polson, and Willard}]{B16}
Bhadra, A., Datta, J., Polson, N.~G., and Willard, B. (2016).
\newblock \enquote{Default Bayesian analysis with global-local shrinkage
  priors.}
\newblock {\em Biometrika\/}, 103(4): 955--969.
\endbibitem

\bibitem[{{Bhadra} et~al.(2017){Bhadra}, {Datta}, {Polson}, and
  {Willard}}]{B17}
{Bhadra}, A., {Datta}, J., {Polson}, N.~G., and {Willard}, B. (2017).
\newblock \enquote{{Horseshoe Regularization for Feature Subset Selection}.}
\newblock {\em ArXiv e-prints\/}.
\endbibitem

\bibitem[{Carvalho et~al.(2010)Carvalho, Polson, and Scott}]{C10}
Carvalho, C.~M., Polson, N.~G., and Scott, J.~G. (2010).
\newblock \enquote{The horseshoe estimator for sparse signals.}
\newblock {\em Biometrika\/}, 97(2): 465--480.
\endbibitem

\bibitem[{Chipman et~al.(2010)Chipman, George, McCulloch et~al.}]{CGM10}
Chipman, H.~A., George, E.~I., McCulloch, R.~E., et~al. (2010).
\newblock \enquote{{BART:} {B}ayesian additive regression trees.}
\newblock {\em The Annals of Applied Statistics\/}, 4(1): 266--298.
\endbibitem

\bibitem[{Clyde et~al.(2011)Clyde, Ghosh, and Littman}]{C11}
Clyde, M.~A., Ghosh, J., and Littman, M.~L. (2011).
\newblock \enquote{Bayesian Adaptive Sampling for Variable Selection and Model
  Averaging.}
\newblock {\em Journal of Computational and Graphical Statistics\/}, 20(1):
  80--101.
\endbibitem

\bibitem[{Clyde and Parmigiani(1998)}]{CP98}
Clyde, M.~A. and Parmigiani, G. (1998).
\newblock \enquote{Protein construct storage: Bayesian variable selection and
  prediction with mixtures.}
\newblock {\em Journal of Biopharmaceutical Statistics\/}, 8(3): 431--443.
\endbibitem

\bibitem[{Fan and Li(2001)}]{FL01}
Fan, J. and Li, R. (2001).
\newblock \enquote{Variable selection via nonconcave penalized likelihood and
  its oracle properties.}
\newblock {\em Journal of the American Statistical Association\/}, 96(456):
  1348--1360.
\endbibitem

\bibitem[{Friedman et~al.(2010)Friedman, Hastie, and Tibshirani}]{F10}
Friedman, J., Hastie, T., and Tibshirani, R. (2010).
\newblock \enquote{Regularization paths for generalized linear models via
  coordinate descent.}
\newblock {\em Journal of Statistical Software\/}, 33(1): 1.
\endbibitem

\bibitem[{Gelman(2004)}]{G04}
Gelman, A. (2004).
\newblock \enquote{Prior distributions for variance parameters in hierarchical
  models.}
\newblock {\em Bayesian Analysis\/}.
\endbibitem

\bibitem[{Gelman et~al.(2014)Gelman, Carlin, Stern, Dunson, Vehtari, and
  Rubin}]{G14}
Gelman, A., Carlin, J.~B., Stern, H.~S., Dunson, D.~B., Vehtari, A., and Rubin,
  D.~B. (2014).
\newblock {\em Bayesian Data Analysis\/}, volume~2.
\newblock CRC press Boca Raton, FL.
\endbibitem

\bibitem[{George and McCulloch(1993)}]{GM93}
George, E.~I. and McCulloch, R.~E. (1993).
\newblock \enquote{Variable selection via Gibbs sampling.}
\newblock {\em Journal of the American Statistical Association\/}, 88(423):
  881--889.
\endbibitem

\bibitem[{George and McCulloch(1997)}]{GM97}
--- (1997).
\newblock \enquote{Approaches for {B}ayesian Variable Selection.}
\newblock {\em Statistica Sinica\/}, 7: 339--373.
\endbibitem

\bibitem[{Hans(2009)}]{H09}
Hans, C. (2009).
\newblock \enquote{Bayesian lasso regression.}
\newblock {\em Biometrika\/}, 96(4): 835--845.
\endbibitem

\bibitem[{Jeffreys(1961)}]{J61}
Jeffreys, H. (1961).
\newblock {\em The Theory of Probability\/}.
\newblock Oxford University Press, 3 edition.
\endbibitem

\bibitem[{Liang et~al.(2008)Liang, Paulo, Molina, Clyde, and Berger}]{L08}
Liang, F., Paulo, R., Molina, G., Clyde, M.~A., and Berger, J.~O. (2008).
\newblock \enquote{Mixtures of g priors for Bayesian variable selection.}
\newblock {\em Journal of the American Statistical Association\/}, 103(481):
  410--423.
\endbibitem

\bibitem[{Mazumder et~al.(2011)Mazumder, Friedman, and Hastie}]{M11}
Mazumder, R., Friedman, J.~H., and Hastie, T. (2011).
\newblock \enquote{Sparsenet: Coordinate descent with nonconvex penalties.}
\newblock {\em Journal of the American Statistical Association\/}, 106(495):
  1125--1138.
\endbibitem

\bibitem[{Mitchell and Beauchamp(1988)}]{MB88}
Mitchell, T.~J. and Beauchamp, J.~J. (1988).
\newblock \enquote{Bayesian variable selection in linear regression.}
\newblock {\em Journal of the American Statistical Association\/}, 83(404):
  1023--1032.
\endbibitem

\bibitem[{Park and Casella(2008)}]{PC08}
Park, T. and Casella, G. (2008).
\newblock \enquote{The {B}ayesian {L}asso.}
\newblock {\em Journal of the American Statistical Association\/}, 103(482):
  681--686.
\endbibitem

\bibitem[{Piironen and Vehtari(2017)}]{PV17}
Piironen, J. and Vehtari, A. (2017).
\newblock \enquote{On the Hyperprior Choice for the Global Shrinkage Parameter
  in the Horseshoe Prior.}
\newblock In {\em Artificial Intelligence and Statistics\/}, 905--913.
\endbibitem

\bibitem[{Polson and Scott(2010)}]{PS10}
Polson, N.~G. and Scott, J.~G. (2010).
\newblock \enquote{Shrink globally, act locally: sparse Bayesian regularization
  and prediction.}
\newblock {\em Bayesian Statistics\/}, 9: 501--538.
\endbibitem

\bibitem[{Robert et~al.(2009)Robert, Chopin, and Rousseau}]{R09}
Robert, C.~P., Chopin, N., and Rousseau, J. (2009).
\newblock \enquote{Harold Jeffreys's Theory of Probability Revisited.}
\newblock {\em Statistical Science\/}, 141--172.
\endbibitem

\bibitem[{Ro{\v{c}}kov{\'a}(2018)}]{R18}
Ro{\v{c}}kov{\'a}, V. (2018).
\newblock \enquote{Bayesian estimation of sparse signals with a continuous
  spike-and-slab prior.}
\newblock {\em The Annals of Statistics\/}, 46(1): 401--437.
\endbibitem

\bibitem[{Ro{\v{c}}kov{\'a} and George(2014)}]{RG14}
Ro{\v{c}}kov{\'a}, V. and George, E.~I. (2014).
\newblock \enquote{EMVS: The EM approach to Bayesian variable selection.}
\newblock {\em Journal of the American Statistical Association\/}, 109(506):
  828--846.
\endbibitem

\bibitem[{Ro{\v{c}}kov{\'a} and George(2018)}]{RG18}
--- (2018).
\newblock \enquote{The {S}pike-and-{S}lab {LASSO}.}
\newblock {\em Journal of the American Statistical Association\/}, 113(521):
  431--444.
\endbibitem

\bibitem[{Ro\v{c}kov\'{a} and Moran(2017)}]{SSLpackage}
Ro\v{c}kov\'{a}, V. and Moran, G. (2017).
\newblock {\em SSLASSO: The Spike-and-Slab LASSO\/}.
\newline\urlprefix\url{https://cran.r-project.org/package=SSLASSO}
\endbibitem

\bibitem[{Ro\v{c}kov\'{a} and Moran(2018)}]{EMVSpackage}
--- (2018).
\newblock {\em EMVS: The Expectation-Maximization Approach to Bayesian Variable
  Selection\/}.
\newline\urlprefix\url{https://cran.r-project.org/package=EMVS}
\endbibitem

\bibitem[{St{\"a}dler et~al.(2010)St{\"a}dler, B{\"u}hlmann, and Van
  De~Geer}]{S10}
St{\"a}dler, N., B{\"u}hlmann, P., and Van De~Geer, S. (2010).
\newblock \enquote{l1 penalization for mixture regression models.}
\newblock {\em Test\/}, 19(2): 209--256.
\endbibitem

\bibitem[{Sun and Zhang(2010)}]{SZ10}
Sun, T. and Zhang, C.-H. (2010).
\newblock \enquote{Comments on: l1-penalization for mixture regression models.}
\newblock {\em Test\/}, 19(2): 270--275.
\endbibitem

\bibitem[{Sun and Zhang(2012)}]{SZ12}
--- (2012).
\newblock \enquote{Scaled sparse linear regression.}
\newblock {\em Biometrika\/}.
\endbibitem

\bibitem[{Sun and Zhang(2013)}]{SZ13}
--- (2013).
\newblock \enquote{Sparse matrix inversion with scaled lasso.}
\newblock {\em The Journal of Machine Learning Research\/}, 14(1): 3385--3418.
\endbibitem

\bibitem[{van~der Pas et~al.(2016)van~der Pas, Salomond, Schmidt-Hieber
  et~al.}]{VDP16}
van~der Pas, S., Salomond, J.-B., Schmidt-Hieber, J., et~al. (2016).
\newblock \enquote{Conditions for posterior contraction in the sparse normal
  means problem.}
\newblock {\em Electronic {J}ournal of {S}tatistics\/}, 10(1): 976--1000.
\endbibitem

\bibitem[{Zhang(2010)}]{Z10}
Zhang, C.-H. (2010).
\newblock \enquote{Nearly unbiased variable selection under minimax concave
  penalty.}
\newblock {\em The Annals of Statistics\/}, 38(2): 894--942.
\endbibitem

\bibitem[{Zhang and Zhang(2012)}]{ZZ12}
Zhang, C.-H. and Zhang, T. (2012).
\newblock \enquote{A general theory of concave regularization for
  high-dimensional sparse estimation problems.}
\newblock {\em Statistical Science\/}, 576--593.
\endbibitem

\bibitem[{Zou(2006)}]{Z06}
Zou, H. (2006).
\newblock \enquote{The adaptive lasso and its oracle properties.}
\newblock {\em Journal of the American Statistical Association\/}, 101(476):
  1418--1429.
\endbibitem

\end{thebibliography}

\subsubsection*{Acknowledgements}
This research was supported by the NSF Grant DMS-1406563 and the James S. Kemper Foundation Faculty Research Fund at the University of Chicago Booth School of Business. We would like to thank the Editors, Associate Editor, and anonymous referees, as well as Mark van de Wiel and Gwenael Leday, for helpful suggestions which improved this paper. 

\section*{Appendix}
\begin{appendix}

\section{Gibbs sampler for Bayesian ridge regression}

We derive the Gibbs sampler used to obtain posterior estimates for the independent Bayesian ridge regression model in Section 3.2 of the main paper. The model is:
\begin{align}
&\mY \sim N_n(\mX\vbeta, \sigma^2\mI) \\
&\vbeta \sim N_p(0, \tau^2\mI) \\
&\pi(\sigma) \propto 1/\sigma.
\end{align}

The full conditional distributions of the parameters $\vbeta$ and $\sigma^2$ are:
\begin{align}
\vbeta|\mY, \sigma^2 &\sim N_p(\sigma^{-2}\mV\mX^T\mY, \mV) \label{gibbs_beta}\\
\sigma^2|\mY, \vbeta &\sim IG(n/2, \lVert \mY - \mX\vbeta\rVert^2/2)\label{gibbs_sigma}
\end{align}
where $\mV = \left[\sigma^{-2}\mX^T\mX + \tau^{-2}\mI_p\right]^{-1}$. The Gibbs sampling algorithm alternates sampling from \eqref{gibbs_beta} and \eqref{gibbs_sigma}. After burn-in, the posterior mean estimate is the mean of the samples.

\section{EM Algorithm for the Spike-and-Slab lasso with a conjugate prior}

We provide the details of the EM algorithm for the Spike-and-Slab Lasso with a conjugate prior in Section 6.2 of the main paper.  The model is given by:
\begin{align}
\pi(\vbeta|\vgamma, \sigma^2) &\sim \prod_{j=1}^p \left(\gamma_j\frac{\lambda_1}{2\sigma}e^{-|\beta_j|\lambda_1/\sigma} + (1-\gamma_j)\frac{\lambda_0}{2\sigma}e^{-|\beta_j|\lambda_0/\sigma}\right)\\
\vgamma|\theta &\sim \prod_{j=1}^p \theta^{\gamma_j}(1-\theta)^{1-\gamma_j},\quad \theta\sim \text{Beta}(a,b)\\
p(\sigma^2) &\propto \sigma^{-2}. 
\end{align}
Then, the ``complete'' data log posterior is given by 
\begin{align}
\log \pi(\vbeta, \vgamma, \sigma, \theta|\mY) &= -\frac{1}{2\sigma^2}\lVert \mY - \mX\vbeta\rVert^2 - (n + 2)\log\sigma \notag\\
&\quad + \sum_{j=1}^p \log \left(\gamma_j \frac{\lambda_1}{2\sigma}e^{-|\beta_j|\lambda_1/\sigma} + (1-\gamma_j)\frac{\lambda_0}{2\sigma}e^{-|\beta_j|\lambda_0/\sigma}\right)\notag  \\
&\quad + \sum_{j=1}^p \log\left(\frac{\theta}{1-\theta} \right)\gamma_j  + (a-1)\log(\theta) \notag \\
&\quad+ (p+b-1)\log(1-\theta) + C \label{complete_v1}
\end{align}
The EM algorithm then proceeds as follows: treat $\vgamma$ as unknown and iteratively maximize 
\begin{equation}
E[\log \pi(\vbeta,\vgamma, \sigma, \theta|\mY) |\vbeta^{(k)}, \sigma^{(k)}, \theta^{(k)}]
\end{equation}
where $\vbeta^{(k)}, \sigma^{(k)}, \theta^{(k)}$ are the parameter values after the $k$th iteration.

\end{appendix}

\end{document}